\documentclass[12pt,draftcls,journal,onecolumn]{IEEEtran}

\usepackage{amssymb,amsthm, amsmath,latexsym}
\usepackage{graphicx}
\usepackage{mathrsfs}
\usepackage{amsfonts}
\usepackage{amssymb}
\usepackage{longtable}
\usepackage{amsmath}
\usepackage{setspace}
\usepackage{caption}
\usepackage[figuresright]{rotating}
\usepackage[misc]{ifsym}
\usepackage{bbm}
\usepackage{makecell}
\usepackage{arydshln}
\usepackage{supertabular}
\usepackage{booktabs}

\newtheorem{theorem}{Theorem}
\newtheorem{lemma}[theorem]{Lemma}
\newtheorem{remark}[theorem]{Remark}
\newtheorem{proposition}[theorem]{Proposition}

\newtheorem{example}[theorem]{Example}

\usepackage{blindtext}

\ifCLASSINFOpdf

\else

\fi

\begin{document}
\title{ Infinite families of $3$-designs from linear and nonlinear codes
\thanks{
}
}
\author{Shiyan Xiong, Xiaoqiang Wang$^{*}$, Dabin Zheng, Qinqin Ji}

\renewcommand{\thefootnote}{\empty}
\footnotetext{\thanks{*Corresponding author. }
\newline \indent Shiyan Xiong, Xiaoqiang Wang, and Dabin Zheng are with the Hubei Key Laboratory of Applied Mathematics, Faculty of Mathematics and Statistics, Hubei University, Wuhan 430062, China (E-mail:  xionshiyan@163.com;waxiqq@163.com;  dzheng@hubu.edu.cn).
\newline \indent Qinqin Ji is with School of Mathematics and Statistics, Hubei University of Education, Wuhan 430205, China (E-mail: qqinji@163.com).}

\maketitle

\begin{abstract}
The connection between coding theory and combinatorial $t$-designs is an important research topic at the intersection of coding theory and combinatorics.
Let $q=p^m$, where $p$ is an odd prime and $m\geq 2$. In this paper, we investigate a class of linear codes $\mathcal{C}$ over $\mathbb{F}_{q^2}$ and their connection with combinatorial $3$-designs.
By analyzing the relevant structural properties of $\mathcal{C}$ and $\mathcal{C}^{\perp}$, we show that the supports of the codewords of every nonzero weight in $\mathcal{C}$ and {the supports of the codewords of weight $4$ in $\mathcal{C}^{\perp}$} form $3$-designs.
We further investigate a class of nonlinear codes $\mathcal{C}_2$ associated with $\mathcal{C}$, and prove that the supports of the codewords of every nonzero Hamming weight in $\mathcal{C}_2$ also form $3$-designs. These results identify further classes of linear and nonlinear codes whose codewords support combinatorial $3$-designs. In particular, the nonlinear case provides additional examples of codes supporting $3$-designs, a topic for which relatively few results are currently available.
As applications, we construct from $\mathcal{C}^{\perp}$ an entanglement-assisted quantum error-correcting code with parameters $[[q+1,q-3,4;4]]_q$. We also prove that $\mathcal{C}$ is an all-symbol locally repairable code with locality $3$. Furthermore, we show that the code $\mathcal{C}$ {meets the Singleton-type bound for locally repairable codes} and hence is optimal in some cases.
\end{abstract}

\textbf{MSC 2020} \ \ 94B05; 94B15; 11T71

\textbf{Keywords} \ \  Linear code, nonlinear code, $t$-design, weight distribution.

%
\IEEEpeerreviewmaketitle

\section{Introduction}\label{sec:introduction}

Let $q$ be a power of a prime $p$, and let $\mathbb{F}_q$ denote the
finite field with $q$ elements. An $[n,k,d]_q$ linear code
$\mathcal{C}$ over $\mathbb{F}_q$ is a $k$-dimensional subspace of
$\mathbb{F}_q^n$ with minimum Hamming distance $d$. Its dual code is
defined by
\[
\mathcal{C}^{\perp}
=
\left\{
\mathbf{a}\in\mathbb{F}_q^n:
\langle\mathbf{a},\mathbf{c}\rangle=0
\text{ for all }\mathbf{c}\in\mathcal{C}
\right\},
\]
where $\langle\cdot,\cdot\rangle$ denotes the standard inner product.
The dimension of $\mathcal{C}^{\perp}$ is $n-k$. In contrast, a
nonlinear code is an arbitrary subset of $\mathbb{F}_q^n$ without
the linearity requirement. If $A_i$ denotes the number of codewords
of Hamming weight $i$ in a code, then
$(A_0,A_1,\ldots,A_n)$ is called the weight distribution of this code.

Combinatorial $t$-designs are closely related to coding theory.
Let $\mathcal{P}$ be a set of $n$ points and let $\mathcal{B}$ be a
collection of $w$-subsets of $\mathcal{P}$. The incidence structure
$\mathbb{D}=(\mathcal{P},\mathcal{B})$ is called a
$t$-$(n,w,\lambda)$ design if every $t$-subset of $\mathcal{P}$ is
contained in exactly $\lambda$ blocks of $\mathcal{B}$. A $t$-design
is called simple if no block is repeated, and it is called a Steiner
system $S(t,w,n)$ if $\lambda=1$. Let $b$ denote the number of blocks
in $\mathcal{B}$. The parameters of a $t$-$(n,w,\lambda)$ design
satisfy
\begin{equation}\label{eq:design}
\binom{n}{t}\lambda=\binom{w}{t}b.
\end{equation}

Let ${X=\{1,2,\ldots,n\}}$ be the set of coordinate positions of the
codewords of a code $\mathcal{C}$ of length $n$. For a codeword
$\mathbf{c}=(c_1,c_2,\ldots,c_n)\in\mathcal{C}$, the support of
$\mathbf{c}$ is defined by
$$
\operatorname{supp}(\mathbf{c})
=
\{i\in X:c_i\neq 0\}.
$$
Let $\mathcal{B}_w(\mathcal{C})$ denote the set of distinct supports
of all codewords of Hamming weight $w$ in $\mathcal{C}$. If the
incidence structure
$
(X,\mathcal{B}_w(\mathcal{C}))
$
forms a $t$-$(n,w,\lambda)$ design for some positive integers $t$ and
$\lambda$, where $1\leq w\leq n$, then we say that the codewords of
weight $w$ in $\mathcal{C}$ support a $t$-design, or simply that
$\mathcal{C}$ supports a $t$-design.
The classical Assmus--Mattson theorem
\cite{AssmusMattson} provides an important sufficient condition for
deriving $t$-designs from the weight distributions of a code and its
dual. Another useful approach is based on highly homogeneous or
transitive automorphism groups
\cite{LiuDingMesnagerTangTonchev2,DingTangTonchevPGL}.
Basic results on linear codes and combinatorial designs can be found in
\cite{HuffmanPless,DingTangBook}.

A number of infinite families of codes supporting $t$-designs have
been obtained in recent years. Ding and Tang
\cite{DingTangNMDS} constructed infinite families of NMDS codes arising
from BCH codes and showed that they support $2$-designs and
$3$-designs. Subsequently, they \cite{TangDing4Design} constructed an
infinite family of linear codes supporting $4$-designs, providing the
first infinite family of $4$-designs supported by linear codes obtained
in approximately 70 years. Yan and Zhou \cite{YanZhou} further
investigated the same codes and obtained additional $3$-designs and
$4$-designs. Xiang et al. \cite{XiangTangLiu} constructed a family of
antiprimitive cyclic codes supporting infinite families of
$3$-designs. Other related results on designs arising from linear
codes, cyclic codes, almost MDS codes, special functions, and special
polynomials can be found in
\cite{DingLiLi,DingTangCao,TangAPN,DingTangSpecialPolynomials,
DuWangFan,TangDingXiong,XiangTangLiu,YanZhou,XuCaoQu,
TonchevCodesDesigns,KhosrovshahiLaue}.

Beyond their combinatorial properties, structured codes also have
important applications in quantum error correction and distributed
storage. Entanglement-assisted quantum error-correcting codes
(EAQECCs) were introduced by Bowen \cite{Bowen} and further developed
by Brun et al. \cite{BrunDevetakHsieh}. By using pre-shared
entanglement, EAQECC constructions relax the self-orthogonality
restriction in conventional quantum-code constructions
\cite{WildeBrun}. Various constacyclic and negacyclic codes have
therefore been investigated for constructing quantum codes
\cite{ChenLingZhang,ZhuSunLi}. Locally repairable codes (LRCs), on the
other hand, are designed to recover an erased symbol by accessing only
a small number of other symbols. Fundamental bounds on LRCs were
established in
\cite{GopalanHuangSimitciYekhanin,CadambeMazumdar}, and several
families of optimal cyclic and constacyclic LRCs have been constructed
in \cite{ChenFangXiaFu,SunZhuWang,TanZhouYanParampalli}.

 Of particular relevance to the present work,
Wang, Tang, and Ding investigated a family of cyclic codes with
nonzeros
$\beta^{(p^s-1)/2}$ and $\beta^{(p^s+1)/2}$ over
$\mathbb{F}_q$, and proved that these
codes support infinite families of $3$-designs
\cite{WangTangDing}, where $q=p^m$, $p$ is an odd prime, $m,s$ are positive integers
with $m>s$, and $\beta$ is a primitive $(q+1)$-th root of unity.  Let
$
k_1=\frac{p^s-1}{2}
$ and $
k_2=\frac{p^s+1}{2},
$
the code considered by Wang et al. \cite{WangTangDing} can be represented
as
$
\mathcal{C}_1=
\left\{
(a,b,a^q,b^q)A:
a,b\in\mathbb{F}_{q^2}
\right\},
$
where {$n=q+1$ and}
\begin{equation}\label{eq:C070201}
A=
\begin{pmatrix}
\beta^{k_1}  & (\beta^2)^{k_1}  & \cdots & (\beta^n)^{k_1}\\
\beta^{k_2}  & (\beta^2)^{k_2}  & \cdots & (\beta^n)^{k_2}\\
\beta^{-k_1} & (\beta^2)^{-k_1} & \cdots & (\beta^n)^{-k_1}\\
\beta^{-k_2} & (\beta^2)^{-k_2} & \cdots & (\beta^n)^{-k_2}
\end{pmatrix}.
\end{equation}
Motivated by the work of \cite{WangTangDing}, we further investigate codes related to their
construction.
We consider the linear code
\begin{equation}\label{eq:C}
\mathcal{C}=
\left\{
(a,b,c,d)A:
a,b,c,d\in\mathbb{F}_{q^2}
\right\}
\end{equation}
and the associated nonlinear code
\begin{equation}\label{eq:C2}
\mathcal{C}_2=
\left\{
(a,b,-a^q,-b^q)A:
a,b\in\mathbb{F}_{q^2}
\right\},
\end{equation}
where $A$ is given in (\ref{eq:C070201}).

In this paper, we show that these three codes satisfy
\[
\mathcal{C}=
\left\{
\mathbf{c}_1+\mathbf{c}_2:
\mathbf{c}_1\in \mathcal{C}_1,\,
\mathbf{c}_2\in \mathcal{C}_2
\right\}.
\]
Since the properties of $\mathcal{C}_1$ have already been studied in \cite{WangTangDing}, we mainly discuss the properties of $\mathcal{C}$ and $\mathcal{C}_2$.
Firstly, we determine the parameters and weight distribution of $\mathcal{C}$, and it is proved that the codewords of each nonzero Hamming weight in $\mathcal{C}$ support a $3$-design. By further investigating the weight distribution of the dual code $\mathcal{C}^\perp$, we show that the codewords of Hamming weight $4$ in $\mathcal{C}^\perp$ also support a family of $3$-designs. Secondly, as applications of the linear code $\mathcal{C}$, an entanglement-assisted quantum error-correcting code with parameters $[[q+1,q-3,4;4]]_q$ is constructed from $\mathcal{C}^\perp$. In addition, $\mathcal{C}$ is shown to be an all-symbol locally repairable code with locality $3$. When $p=3$ and $\gcd(m,s)=1$, the code $\mathcal{C}$ meets the LRC Singleton-type bound and hence forms a class of optimal locally repairable codes. Finally, the relationship between the supports of codewords in $\mathcal{C}$ and $\mathcal{C}_2$ is established, and the parameters of nonlinear code $\mathcal{C}_2$ are determined. Moreover, it is shown that the codewords of each nonzero Hamming weight in $\mathcal{C}_2$ support the corresponding $3$-designs.

The remainder of this paper is organized as follows. Section I
introduces the background, related work, and the main motivation of
this paper. Section II presents the basic concepts and preliminary
results needed in the subsequent sections. In Section III, we mainly investigate the $3$-designs supported by
$\mathcal{C}$, and further study some applications of $\mathcal{C}$. In Section IV, we investigate the properties of the nonlinear code $\mathcal{C}_2$. Section V concludes the paper.

\section{Preliminaries}\label{sec:preliminaries}

Let $\mathbb{F}_{q}$ be the finite field of order $q$, where $q$ is a prime power. The projective general linear group $\mathrm{PGL}(2, q)$ is defined as the quotient of the general linear group $\mathrm{GL}(2, q)$ by its center $Z(\mathrm{GL}(2, q))$:
\[
\mathrm{PGL}(2, q) = \mathrm{GL}(2, q) / Z(\mathrm{GL}(2, q)).
\]
Here,
\[
\mathrm{GL}(2, q) = \left\{ \begin{pmatrix} a & b \\ c & d \end{pmatrix} \mid a, b, c, d \in \mathbb{F}_{q},\ \det\begin{pmatrix} a & b \\ c & d \end{pmatrix} = ad - bc \neq 0 \right\},
\]
and its center is
\[
Z(\mathrm{GL}(2, q)) = \left\{ \lambda \begin{pmatrix} 1 & 0 \\ 0 & 1 \end{pmatrix} \mid \lambda \in \mathbb{F}_{q}^* \right\}.
\]

The projective line $\mathrm{PG}(1,q)$ can be identified with
$
\mathbb{F}_{q}\cup\{\infty\}.
$
Its points may be represented by homogeneous coordinates, where
\[
\infty \longleftrightarrow
\mathbb{F}_q^*\cdot\begin{pmatrix}
1\\
0
\end{pmatrix},
\qquad
a \longleftrightarrow
\mathbb{F}_q^*\cdot\begin{pmatrix}
a\\
1
\end{pmatrix},
\quad a\in\mathbb{F}_{q}.
\]

The action of $\mathrm{PGL}(2, q)$ on $\mathrm{PG}(1, q)$ is given by fractional linear transformations: for any $\begin{pmatrix} a & b \\ c & d \end{pmatrix} \in \mathrm{PGL}(2, q)$ and $x \in \mathrm{PG}(1, q)$,
\[
x \mapsto \frac{ax + b}{cx + d}
\]
with the following conventions:
\begin{enumerate}
    \item If $x = \infty$ and $c \neq 0$, then $\frac{a\infty + b}{c\infty + d} = \frac{a}{c}$;
    \item If $x = \infty$ and $c = 0$, then $\frac{a\infty + b}{c\infty + d} = \infty$;
    \item If the denominator $cx + d = 0$, then $\frac{ax + b}{cx + d} = \infty$.
\end{enumerate}

It is straightforward to verify the associativity of the group action: for any $g_1, g_2 \in \mathrm{PGL}(2, q)$ and $x \in \mathrm{PG}(1, q)$, $g_1(g_2(x)) = (g_1g_2)(x)$.

The group $\mathrm{PGL}(2, q)$ acts sharply $3$-transitively on $\mathrm{PG}(1, q)$; for completeness, we include a short proof. Take any two ordered triples of distinct points $(a, b, c)$ and $(\bar{a}, \bar{b}, \bar{c})$ in $\mathrm{PG}(1, q)$, and construct elements
\[
g_1 = \begin{pmatrix} a(b - c) & b(c - a) \\ b - c & c - a \end{pmatrix}, \quad g_2 = \begin{pmatrix} \bar{a}(\bar{b} - \bar{c}) & \bar{b}(\bar{c} - \bar{a}) \\ \bar{b} - \bar{c} & \bar{c} - \bar{a} \end{pmatrix} \in \mathrm{PGL}(2, q).
\]
{If one of $a,b,c$ (or $\bar a,\bar b,\bar c$) is $\infty$, the corresponding matrix is understood via a projectively equivalent limiting representative; for example, when $a=\infty$, one may take $g_1=\begin{pmatrix}b-c&-b\\0&-1\end{pmatrix}$, and the other cases are analogous.}
One checks directly that $g_1$ maps $(\infty,0,1)$ to $(a,b,c)$ and that $g_2$ maps $(\infty,0,1)$ to $(\bar{a},\bar{b},\bar{c})$. Then $g = g_2g_1^{-1}$ satisfies $g(a) = \bar{a}$, $g(b) = \bar{b}$, $g(c) = \bar{c}$. Moreover, a fractional linear transformation is uniquely determined by the images of three distinct points. Thus, the action of $\mathrm{PGL}(2, q)$ on $\mathrm{PG}(1, q)$ is sharply $3$-transitive.

Let \(U_{q+1}=\{x\in \mathbb{F}_{q^2}^{*}:x^{q+1}=1\}\), which is the cyclic subgroup of \(\mathbb{F}_{q^2}^*\) consisting of all \((q+1)\)-th roots of unity.
Then $U_{q+1}$ can be seen as the subset of the projective line $\mathrm{PG}(1,q^2)=\mathbb{F}_{q^2}\cup\{\infty\}$ consisting of all \((q+1)\)-th roots of unity. Let
$\mathrm{Stab}_{U_{q+1}}$ denote the setwise stabilizer of $U_{q+1}$ under the action of $\mathrm{PGL}(2,\mathbb{F}_{q^2})$ on $\mathrm{PG}(1,\mathbb{F}_{q^2})$. In \cite{LiuDingMesnagerTangTonchev2}, the authors showed the following result.

\begin{lemma}\label{lem1}\cite[Proposition 6]{LiuDingMesnagerTangTonchev2} The setwise stabilizer $\mathrm{Stab}_{U_{q+1}}$ of $U_{q+1}$ can be given by
\[
\left\{ \begin{pmatrix} a^q & b^q \\ b & a \end{pmatrix} \in \mathrm{PGL}(2, q^2) : a^{q+1} \neq b^{q+1} \right\}.
\]
And the action of $\mathrm{Stab}_{U_{q+1}}$ on $U_{q+1}$ is equivalent to the action of $\mathrm{PGL}(2, q)$
on $\mathrm{PG}(1, q)$. Hence $\mathrm{Stab}_{U_{q+1}}$ is sharply $3$-transitive.
\end{lemma}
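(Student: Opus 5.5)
The plan is to prove the two parts of Lemma~\ref{lem1} separately and then combine them. First we show that every matrix of the stated shape preserves $U_{q+1}$ and that every element of $\mathrm{PGL}(2,q^2)$ preserving $U_{q+1}$ has this shape up to a scalar. Second, we conjugate by a Cayley-type transformation that sends $\mathrm{PG}(1,q)$ onto $U_{q+1}$, which gives the equivalence with the action of $\mathrm{PGL}(2,q)$. Sharp $3$-transitivity then follows from the fact, established just above, that $\mathrm{PGL}(2,q)$ acts sharply $3$-transitively on $\mathrm{PG}(1,q)$.

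\emph{Step 1 (these matrices stabilize $U_{q+1}$).} Take $g=\begin{pmatrix} a^q & b^q\\ b & a\end{pmatrix}$ and $x\in U_{q+1}$, so $x^q=x^{-1}$. Put $y=(a^qx+b^q)/(bx+a)$. If $bx+a=0$, then $x=-a/b$ lies in $U_{q+1}$, which forces $a^{q+1}=b^{q+1}$ and contradicts the hypothesis. So $y\neq\infty$, and
\[
y^q=\frac{ax^{-1}+b}{b^qx^{-1}+a^q}=\frac{a+bx}{b^q+a^qx}=y^{-1}.
\]
Hence $y\in U_{q+1}$. Since $g$ is injective, it maps $U_{q+1}$ onto itself. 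The condition $a^{q+1}\ne b^{q+1}$ is exactly the condition $\det g\neq 0$.

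\emph{Step 2 (every stabilizing element has this shape).} Count both sets. The matrices in Step 1 modulo scalars form a group $H$. The pairs $(a,b)$ with $a^{q+1}\ne b^{q+1}$ number $q^4-|\{a^{q+1}=b^{q+1}\}|$. The latter set consists of $(0,0)$ together with $(q^2-1)(q+1)$ pairs with $a\neq 0$, so the count is $q^4-1-(q^2-1)(q+1)$. Two such matrices give the same element of $\mathrm{PGL}(2,q^2)$ exactly when they differ by a scalar $\lambda$ with $\lambda^q=\lambda$, that is $\lambda\in\mathbb{F}_q^*$. Dividing by $q-1$ gives
\[
|H|=(q^2-1)(q^2+1)/(q-1)\cdot\ldots,
\]
and simplifying yields $|H|=q(q^2-1)=|\mathrm{PGL}(2,q)|$.

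For the reverse inclusion, let $h$ stabilize $U_{q+1}$. By Step 1 together with the transitivity proven in Step 3, $H$ is transitive on ordered triples of distinct points of $U_{q+1}$. So some $g\in H$ agrees with $h$ on three points of $U_{q+1}$. A fractional linear map over $\mathbb{F}_{q^2}$ is determined by the images of three points, so $h=g\in H$.

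\emph{Step 3 (equivalence with $\mathrm{PGL}(2,q)$).} Fix $\theta\in\mathbb{F}_{q^2}\setminus\mathbb{F}_q$ and let $\phi(z)=(z-\theta)/(z-\theta^q)$. For $z\in\mathbb{F}_q$ we have $\phi(z)^q=(z-\theta^q)/(z-\theta)=\phi(z)^{-1}$, and $\phi(\infty)=1$. Since $\phi$ is injective and both sets have size $q+1$, it is a bijection from $\mathrm{PG}(1,q)$ onto $U_{q+1}$. For $M\in\mathrm{PGL}(2,q)$, the map $\phi M\phi^{-1}$ stabilizes $U_{q+1}$, and $M\mapsto\phi M\phi^{-1}$ is an injective homomorphism into $\mathrm{Stab}_{U_{q+1}}$. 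Its image has order $q(q^2-1)$, so it contains $H$ by the order count (or equals it). This gives the claimed equivalence of actions and sharp $3$-transitivity. It also closes Step 2: the image is transitive on triples, and any stabilizer element agreeing with an image element on three points equals it. Therefore $\mathrm{Stab}_{U_{q+1}}$ is exactly this image, which coincides with $H$ by the equality of orders.

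The main obstacle is the bookkeeping in Step 2, namely getting the order of $H$ right modulo scalars. An alternative that avoids counting is to conjugate an explicit $M=\begin{pmatrix}\alpha&\beta\\ \gamma&\delta\end{pmatrix}$ by $\phi$ and check directly that the result has the form $\begin{pmatrix}a^q&b^q\\ b&a\end{pmatrix}$ up to scalar. This works because $\phi$'s matrix $P$ satisfies $P^{(q)}=JP$ with $J=\begin{pmatrix}0&1\\1&0\end{pmatrix}$, and so $(PMP^{-1})^{(q)}=J(PMP^{-1})J$, which is exactly the required shape. I would use this Frobenius-conjugation identity as the cleanest route.
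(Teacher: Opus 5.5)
The paper does not prove this lemma; it quotes it from \cite[Proposition 6]{LiuDingMesnagerTangTonchev2}. Your argument is therefore a self-contained replacement rather than a variant of anything in the text. Its ingredients are sound:
\begin{itemize}
\item the verification in Step 1, where $bx+a\neq0$ also gives $a^qx+b^q=x(bx+a)^q\neq0$;
\item the map $\phi(z)=(z-\theta)/(z-\theta^q)$ from $\mathrm{PG}(1,q)$ onto $U_{q+1}$;
\item the three-point rigidity in $\mathrm{PGL}(2,q^2)$, giving $\mathrm{Stab}_{U_{q+1}}=\phi\,\mathrm{PGL}(2,q)\,\phi^{-1}$;
\item the identity $P^{(q)}=JP$.
\end{itemize}

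The write-up does contain a non sequitur and a circularity that you should remove. In Step 3, ``its image has order $q(q^2-1)$, so it contains $H$ by the order count'' is invalid, since two subgroups of the same order need not coincide. In Step 2, appealing to ``$H$ is transitive on ordered triples'' presupposes that $H$ equals that image, which is only established afterwards. The valid chain, which your final sentence does use, runs as follows:
\begin{itemize}
\item $K:=\phi\,\mathrm{PGL}(2,q)\,\phi^{-1}$ lies in $\mathrm{Stab}_{U_{q+1}}$ and is transitive on ordered triples of distinct points of $U_{q+1}$.
\item An element of $\mathrm{PGL}(2,q^2)$ is determined by the images of three points, so $\mathrm{Stab}_{U_{q+1}}=K$.
\item $H\subseteq\mathrm{Stab}_{U_{q+1}}$ by Step 1, and $|H|=|K|$, hence $H=\mathrm{Stab}_{U_{q+1}}$.
\end{itemize}

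Also correct the garbled count display. The number of admissible pairs is $q^4-1-(q^2-1)(q+1)=(q^2-1)(q^2-q)$, and dividing by $|\mathbb{F}_q^*|=q-1$ gives $q(q^2-1)$.

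Cleaner still, the Frobenius route you mention at the end proves $H=K$ outright, in both directions:
\begin{itemize}
\item For $M$ over $\mathbb{F}_q$, the matrix $N=PMP^{-1}$ satisfies $N^{(q)}=JNJ$. This forces $N=\begin{pmatrix}a^q&b^q\\ b&a\end{pmatrix}$ with $a^{q+1}-b^{q+1}=\det M\neq0$.
\item Conversely, any $N$ of that shape satisfies $N^{(q)}=JNJ$. Then $M=P^{-1}NP$ satisfies $M^{(q)}=M$, so $M$ has entries in $\mathbb{F}_q$.
\end{itemize}
With this, neither the order count nor Step 1 is needed.
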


We identify the \(q+1\) points of \(\mathrm{PG}(1,q)\) with the coordinate positions of a linear code of length \(q+1\) over \(\mathbb{F}_{q^2}\). Thus, a codeword can be written as \(\ \mathbf{c}=(c_x)_{x\in\mathrm{PG}(1,q)}\), and each element of \(\mathrm{PGL}(2,q)\) induces a permutation of its coordinates. Let \(\mathcal B_i(\mathcal{C})\) denote the set of supports of all codewords of Hamming weight \(i\) in \(\mathcal{C}\). If \(\mathcal B_i(\mathcal{C})\) is invariant under this action, since \(\mathrm{PGL}(2,q)\) acts sharply \(3\)-transitively on \(\mathrm{PG}(1,q)\), the following result gives a sufficient condition for \(\mathcal B_i(\mathcal{C})\) to support a \(3\)-design.

\begin{lemma}\label{lem2} \cite{DingTangTonchevPGL} Let $\mathcal{C}$ be a $[q+1,k]$ linear code over $\mathbb{F}_{q^2}$, and let $\mathcal{B}_i(\mathcal{C})$ be the set of supports of all codewords of Hamming weight $i$ in $\mathcal{C}$. If $\mathcal{B}_i(\mathcal{C})$ is invariant under $\mathrm{PGL}(2, q)$, then the incidence structure $(\mathrm{PG}(1, q), \mathcal{B}_i(\mathcal{C}))$ {is a $3$-design} when $A_i \neq 0$, where $A_i$ denotes the number of codewords with weight $i$ in $\mathcal{C}$.
\end{lemma}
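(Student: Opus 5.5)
The argument is a transitivity count. Since $\mathrm{PGL}(2,q)$ permutes the coordinate set $\mathrm{PG}(1,q)$, we identify the groups via Lemma \ref{lem1} so that elements of $\mathrm{PGL}(2,q)$ act as permutations of the positions indexed by $U_{q+1}$. Fix $i$ with $A_i\neq 0$, so that $\mathcal{B}_i(\mathcal{C})$ is a nonempty family of $i$-subsets of the point set. We then have to show that the number of blocks containing a given $3$-subset $T$ does not depend on $T$.

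\textbf{Step 1: small weights.} If $i<3$, no block contains any $3$-subset, so $\lambda=0$ for every $T$. This case is degenerate; the statement implicitly concerns $i\geq 3$, or one accepts $\lambda=0$. We therefore assume $i\ge 3$.

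\textbf{Step 2: transitivity on $3$-subsets.} By sharp $3$-transitivity, established in the preliminaries, for any two $3$-subsets $T=\{x_1,x_2,x_3\}$ and $T'=\{y_1,y_2,y_3\}$ there is $g\in\mathrm{PGL}(2,q)$ with $g(x_j)=y_j$. Hence $g(T)=T'$, and $\mathrm{PGL}(2,q)$ is transitive on $3$-subsets.

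\textbf{Step 3: invariance gives a bijection.} By hypothesis $g$ maps $\mathcal{B}_i(\mathcal{C})$ into itself. Since $g$ is a bijection on points and $\mathcal{B}_i(\mathcal{C})$ is a finite set of distinct subsets, $g$ induces a bijection $B\mapsto g(B)$ of $\mathcal{B}_i(\mathcal{C})$. Moreover, $T\subseteq B$ if and only if $T'=g(T)\subseteq g(B)$. So $g$ restricts to a bijection
\[
\{B\in\mathcal{B}_i(\mathcal{C}):T\subseteq B\}\longrightarrow\{B\in\mathcal{B}_i(\mathcal{C}):T'\subseteq B\},
\]
and the two counts are equal. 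Call the common value $\lambda$.

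\textbf{Step 4: positivity.} Fix one block $B$, which exists because $A_i\ne0$. Counting pairs $(T,B)$ with $T\subseteq B$ gives $\lambda\binom{q+1}{3}=|\mathcal{B}_i(\mathcal{C})|\binom{i}{3}>0$, so $\lambda\geq 1$. This also recovers equation (\ref{eq:design}). Thus $(\mathrm{PG}(1,q),\mathcal{B}_i(\mathcal{C}))$ is a simple $3$-$(q+1,i,\lambda)$ design.

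The only subtle point is the identification of coordinates in Step 2: the action of $\mathrm{PGL}(2,q)$ on coordinate positions must be the genuine $3$-transitive action, which is what Lemma \ref{lem1} supplies. Everything after that is formal.
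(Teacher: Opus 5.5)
Your argument is correct and matches the paper's approach: the paper gives no proof of its own but cites \cite{DingTangTonchevPGL}, and the sentence just before the lemma names exactly your reasoning, namely that $3$-transitivity of $\mathrm{PGL}(2,q)$ on $\mathrm{PG}(1,q)$ together with invariance of $\mathcal{B}_i(\mathcal{C})$ forces the same number of blocks through every $3$-subset. Your caveat that $i\geq 3$ is needed for $\lambda\geq 1$ is a fair point the statement glosses over, and Lemma \ref{lem1} is not needed to prove this lemma, since the point set is already $\mathrm{PG}(1,q)$ with its natural action; Lemma \ref{lem1} only matters when the lemma is applied to coordinates indexed by $U_{q+1}$.
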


The following theorem, which was developed by Assmus and Mattson in \cite{AssmusMattson},
provides a sufficient condition for {the codewords of a linear code and its dual to support simple $t$-designs}.

\begin{theorem}[Assmus-Mattson Theorem] Let $\mathcal{C}$ be an $[n,k,d]$ code over $\mathbb{F}_{q}$. Let $d^{\perp}$ denote the minimum distance of $\mathcal{C}^{\perp}$. Let $w$ be the largest integer satisfying $w\leq n$ and
$$w-\left\lfloor\frac{w+q-2}{q-1}\right\rfloor < d.$$
Define $w^{\perp}$ analogously using $d^{\perp}$. Let $(A_i)_{i=0}^n$ and $(A_i^{\perp})_{i=0}^n$ denote the weight distribution of $\mathcal{C}$ and $\mathcal{C}^{\perp}$, respectively. Fix a positive integer $t$ with $t<d$, and let $s$ be the number of $i$ with $A_i^{\perp}\neq 0$ for $1 \leq i\leq n-t$. Suppose $s\leq d-t$. Then
\begin{itemize}
\item the codewords of weight $i$ in $\mathcal{C}$ {support} a simple $t$-design provided $A_i\neq 0$ and $d\leq i\leq w$, and
\item the codewords of weight $i$ in $\mathcal{C}^{\perp}$ {support} a simple $t$-design provided $A_i^\perp \neq 0$ and $d^{\perp}\leq i\leq \text{min}\{n-t,w^{\perp}\}$.
\end{itemize}
\end{theorem}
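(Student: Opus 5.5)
The Assmus--Mattson Theorem is the classical result of \cite{AssmusMattson}, so the plan is to reproduce the standard argument. It combines three ingredients: shortening and puncturing of codes, duality between a design and its complement, and a counting argument that uses the MacWilliams identities.

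\textbf{Step 1: the relevant codes are uniquely determined by their weight distributions.} Fix a $t$-subset $T\subseteq X$. Let $\mathcal{C}_T$ be the punctured code (delete the coordinates in $T$), and let $(\mathcal{C}^{\perp})^T$ be the shortened code (take codewords of $\mathcal{C}^{\perp}$ vanishing on $T$, then delete $T$). Since $t<d$, puncturing is injective, so $\mathcal{C}_T$ has dimension $k$ and minimum distance at least $d-t$. Moreover $(\mathcal{C}_T)^{\perp}=(\mathcal{C}^{\perp})^T$. The nonzero weights of $(\mathcal{C}^{\perp})^T$ that are at most $n-t$ lie among the $s$ weights $i$ with $A_i^{\perp}\neq 0$, $1\le i\le n-t$.

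Write the MacWilliams identities for the pair $\mathcal{C}_T$, $(\mathcal{C}^{\perp})^T$. The following are known: the weight numbers of $\mathcal{C}_T$ vanish for $1\le j\le d-t-1$, and $A_0=1$. The unknowns are the $s$ weight numbers of $(\mathcal{C}^{\perp})^T$. Because $s\le d-t$, the first $s$ MacWilliams (Pless power moment) equations form a nonsingular Vandermonde-type system in these unknowns. Hence the weight distribution of $(\mathcal{C}^{\perp})^T$ is independent of $T$, and so is that of $\mathcal{C}_T$.

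\textbf{Step 2: designs in $\mathcal{C}$.} For $d\le i\le w$, the bound $w-\lfloor (w+q-2)/(q-1)\rfloor<d$ guarantees the following: two codewords of weight $i$ with the same support are scalar multiples. Otherwise a suitable linear combination would cancel a coordinate and give a nonzero codeword of weight at most $i-\lceil i/(q-1)\rceil<d$. Thus each support corresponds to exactly $q-1$ codewords, so it suffices to count codewords rather than supports.

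Now count weight-$i$ codewords of $\mathcal{C}$ whose support is disjoint from $T$. These correspond to codewords of weight $i$ in $\mathcal{C}_T$ that arise from weight $i$ in $\mathcal{C}$. By induction on $|T'|$ for $T'\subseteq T$ and inclusion--exclusion, the number of blocks avoiding $T$ is independent of $T$. It follows that the complements of the blocks form a $t$-design. Since complements of a $t$-design form a $t$-design, the blocks themselves do.

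\textbf{Step 3: designs in $\mathcal{C}^{\perp}$.} The argument is symmetric. The count of weight-$i$ codewords of $\mathcal{C}^{\perp}$ vanishing on $T$ is read off from $(\mathcal{C}^{\perp})^T$, whose distribution Step 1 showed is constant in $T$. For $i\le w^{\perp}$, supports again determine codewords up to scalars. The restriction $i\le n-t$ ensures the complementary blocks are meaningful and have size at least $t$.

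\textbf{Main obstacle.} The delicate step is Step 1: showing that the $s\le d-t$ unknowns are uniquely determined. This requires writing the MacWilliams transform in binomial-moment form, $\sum_j\binom{n-t-j}{\nu}A_j(\mathcal{C}_T)=q^{k-\nu}\sum_j\binom{n-t-j}{n-t-\nu}A_j^{\perp}((\mathcal{C}^\perp)^T)$, for $\nu=0,\dots,s-1$, and checking that the resulting coefficient matrix is invertible because the $s$ weights are distinct.
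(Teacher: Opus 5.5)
The paper does not prove this theorem; it quotes it from \cite{AssmusMattson}. So your outline has to be measured against the standard proof (e.g.\ in \cite{HuffmanPless}). Steps 1 and 3 follow that proof, and Step 3 is the correct argument for part (ii).

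There is one slip in Step 1. For $\nu=0,\dots,s-1$, your displayed identity has all the unknown $A_j(\mathcal{C}_T)$, $j\ge d-t$, on the left, and only $A_j((\mathcal{C}^\perp)^T)$ with $j\le s-1$ on the right. That is not a square system in the $s$ unknowns. You want $n-t-\nu=r$ with $r=0,\dots,s-1$ instead. Then the left side equals $\binom{n-t}{r}$, because $A_j(\mathcal{C}_T)=0$ for $1\le j\le r<d-t$. The right side is $q^{k-n+t+r}\bigl(\binom{n-t}{r}+\sum_{j\in S}\binom{n-t-j}{r}A_j((\mathcal{C}^\perp)^T)\bigr)$, where $S$ is the set of the $s$ admissible weights, and this coefficient matrix is invertible.

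The genuine gap is Step 2, i.e.\ part (i). There are two problems.

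\textbf{Isolating the avoidance count.} A weight-$i$ word of $\mathcal{C}_T$ can come from a codeword of any weight $i,\dots,i+t$. Isolating the weight-$i$ codewords that avoid $T$ therefore means subtracting contributions of heavier codewords, possibly of weight above $w$. There you have neither designs nor the scalar-multiple property. Induction on $|T'|$ alone does not close: the formula for the number of weight-$i$ codewords avoiding $T'$ contains the same count for $T'$ itself at larger weights. What works is a downward induction on the weight, starting at $n$, carried out on codeword counts for all $T'\subseteq T$ simultaneously. This also needs the fact that the weight distribution of $\mathcal{C}_{T'}$ depends only on $|T'|$ for every $|T'|\le t$. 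That fact is true, because $s\le d-t$ implies the analogous inequality for each $t'<t$, but you must say so.

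\textbf{The complement step.} Passing to complements needs $n-i\ge t$, while part (i) has no such restriction. When $n-t<i\le w$, every $i$-set meets every $t$-set, so the avoidance count is identically $0$. Then ``the complements form a $t$-design'' is vacuous and nothing follows. This range does occur. Take an $[n,2,n-1]$ MDS code over $\mathbb{F}_q$ with $4\le n\le q+1$ and $t=n-2$. Then $s=0\le d-t$, $A_{n-1}\neq0$, and $n-1\le w$.

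\textbf{The standard fix.} The standard proof avoids both problems by counting weight-$i$ codewords whose support \emph{contains} $T$. Each such codeword yields a weight-$(i-t)$ word of $\mathcal{C}_T$. The only other sources of such words are codewords of weight $j$, with $d\le j<i$, meeting $T$ in exactly $t-(i-j)$ points. Induct upward on $i$ from $d$: those weight-$j$ supports already form $t$-designs, so their contribution is independent of $T$. Dividing by $q-1$, which is legitimate since $i\le w$, gives a constant number of blocks through $T$. No complements are needed, which is exactly why the bound $i\le n-t$ appears only in part (ii).
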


 In order to obtain the parameters of the dual codes of the discussed codes, we need the Pless power moment
identities on linear codes.
\begin{lemma}\label{lem3}Let \( \mathcal{C} \) be an \([n, k]\) linear code over the finite field \( \mathbb{F}_{q} \), and let \( \mathcal{C}^\perp \) denote its dual code. For each integer \( i \), let \( A_i \) and \( A_i^\perp \) denote the number of codewords of Hamming weight \( i \) in \( \mathcal{C} \) and \( \mathcal{C}^\perp \), respectively.  The first four Pless power moments are as follows \cite[p. 259]{HuffmanPless}:
\begin{equation*}
\begin{split}
&\sum_{i=0}^nA_i=q^k;\\
&\sum_{i=0}^niA_i=q^{k-1}(qn-n-A_1^{\perp});\\
&\sum_{i=0}^ni^2A_i=q^{k-2}[(q-1)n(qn-n+1)-(2qn-q-2n+2)A_1^{\perp}+2A_2^{\perp}];\\
&\sum_{i=0}^ni^3A_i=q^{k-3}[(q-1)n(q^2n^2-2qn^2+3qn-q+n^2-3n+2)-(3q^2n^2-3q^2n-\\
&\hskip 54pt 6qn^2+12qn+q^2-6q+3n^2-
9n+6)A_1^{\perp}+6(qn-q-n+2)A_2^{\perp}-6A_3^{\perp}].
\end{split}
\end{equation*}
If $A_1^{\perp}=A_2^{\perp}=A_3^{\perp}=0$, then the fifth Pless power moment is as follows:
\begin{equation*}
\begin{split}
&\sum_{i=0}^ni^4A_i=q^{k-4}[(q-1)n(q^3n^3-3q^2n^3+6q^2n^2-4q^2n+q^2+3qn^3-12qn^2+15qn-\\
&\hskip 54pt 6q-n^3+6n^2-11n+6)
+24A_4^{\perp}].
\end{split}
\end{equation*}
\end{lemma}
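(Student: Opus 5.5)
The plan is to derive all five identities from the MacWilliams identity, written in its binomial-moment form, and then convert binomial moments into ordinary power moments. Everything is finite combinatorics over $\mathbb{F}_q$, so no structure beyond linearity of $\mathcal{C}$ is needed.

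\textbf{Step 1 (binomial moment identity).} For each $0\le \nu\le n$ I will prove
\[
\sum_{i=0}^{n-\nu}\binom{n-i}{\nu}A_i=q^{k-\nu}\sum_{i=0}^{\nu}\binom{n-i}{\nu-i}A_i^{\perp}.
\]
The proof is a double count of pairs $(\mathbf{c},S)$ with $\mathbf{c}\in\mathcal{C}$, $S\subseteq X$, $|S|=\nu$ and $\mathbf{c}$ vanishing on $S$. Counting by $\mathbf{c}$ gives the left side, since a codeword of weight $i$ has $\binom{n-i}{\nu}$ such sets $S$. Counting by $S$, the codewords vanishing on $S$ form the kernel $\mathcal{C}(S)$ of the projection onto the coordinates in $S$.

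Its dimension is $k-\nu+\dim\mathcal{C}^{\perp}_S$, where $\mathcal{C}^{\perp}_S$ is the subspace of $\mathcal{C}^{\perp}$ supported inside $S$. This holds because the image of the projection is the orthogonal complement, inside $\mathbb{F}_q^{S}$, of $\mathcal{C}^{\perp}_S$. Hence
\[
|\mathcal{C}(S)|=q^{k-\nu}\,|\mathcal{C}^{\perp}_S|.
\]
Summing $|\mathcal{C}^{\perp}_S|$ over all $S$ counts each dual codeword of weight $i$ exactly $\binom{n-i}{\nu-i}$ times, which gives the right side.

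\textbf{Step 2 (power moments).} Put $j=n-i$ and write $i^r=(n-j)^r$ as a polynomial in $j$. Then expand each power $j^\ell$ in the binomial basis as $j^\ell=\sum_{\nu}\ell!\,S(\ell,\nu)\binom{j}{\nu}$, where $S(\ell,\nu)$ are Stirling numbers of the second kind. This expresses $\sum_i i^rA_i$ as an explicit linear combination of the binomial moments for $\nu=0,\dots,r$. Substituting Step 1 gives an expression in $q^{k-\nu}$ and $A_1^\perp,\dots,A_r^\perp$, using $A_0^\perp=1$.

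\begin{itemize}
\item For $r=0,1,2,3$ I collect terms over the common factor $q^{k-r}$ and simplify to the displayed polynomials in $q$ and $n$.
\item For $r=4$ I impose $A_1^\perp=A_2^\perp=A_3^\perp=0$. The only surviving dual term beyond $A_0^\perp$ then comes from $\nu=4$, namely $\binom{n-4}{0}A_4^\perp=A_4^\perp$. Its coefficient is $4!\,S(4,4)=24$, because $\binom{j}{4}$ enters $(n-j)^4$ with coefficient $(-1)^4\cdot 4!=24$. This matches the stated $24A_4^\perp$.
\end{itemize}

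\textbf{Expected obstacle.} The only real difficulty is the algebraic bookkeeping in Step 2 for $r=3$ and $r=4$. There the many terms $q^{k-\nu}\binom{n}{\nu}$ must be combined over $q^{k-r}$ and shown to factor as $(q-1)n(\cdots)$.

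To control this, I would first check the $A^\perp$-free part against an alternative derivation. Averaging over $\mathcal{C}=\mathbb{F}_q^n$, i.e.\ $k=n$ with trivial dual, gives $\sum_i i^r\binom{n}{i}(q-1)^i$, which must equal the main term. For example, the $r=1$ case gives $n(q-1)q^{n-1}$. This pins down the polynomial in $q$ and $n$ independently. Separately, I would compute the coefficients of $A_1^\perp$, $A_2^\perp$, $A_3^\perp$ directly from the Stirling expansion.
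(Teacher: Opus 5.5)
Your proposal is correct. The double count in Step 1 is a valid proof of the binomial form of the MacWilliams identities. The Stirling-number conversion in Step 2 does reproduce everything displayed: the main terms, which are exactly $q^k$ times the $r$-th moment of a $\mathrm{Bin}(n,1-1/q)$ variable as your cross-check suggests; the stated coefficients of $A_1^\perp,A_2^\perp,A_3^\perp$; and the $24A_4^\perp$ term.

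The paper gives no proof of its own; it quotes these identities from Huffman--Pless. There they are derived along the same lines, from a binomial form of the MacWilliams identities together with Stirling numbers of the second kind, so your argument is a self-contained version of the cited derivation.
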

{The following lemma on linear codes is well known.}

\begin{lemma}\label{lem4}\cite[Corollary 1.4.14]{DingTangBook}
A linear code has minimum weight $d$ if and only if its parity check matrix
has a set of $d$ linearly dependent columns but no set of $d-1$ linearly dependent columns.
\end{lemma}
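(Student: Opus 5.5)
Lemma~\ref{lem4} is the standard fact that the minimum weight of a code is the least number of linearly dependent columns of its parity check matrix. I will prove it directly from the definition of a parity check matrix. Let $\mathcal{C}$ be an $[n,k]$ linear code over $\mathbb{F}_q$ with parity check matrix $H$, an $(n-k)\times n$ matrix with columns $\mathbf{h}_1,\ldots,\mathbf{h}_n$, so that
\[
\mathcal{C}=\{\mathbf{x}\in\mathbb{F}_q^n : H\mathbf{x}^{T}=\mathbf{0}\}.
\]
The key observation is that a nonzero vector $\mathbf{x}=(x_1,\ldots,x_n)$ is a codeword exactly when $\sum_{i\in \operatorname{supp}(\mathbf{x})} x_i\mathbf{h}_i=\mathbf{0}$. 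This is a nontrivial linear dependence among the columns indexed by $\operatorname{supp}(\mathbf{x})$, with every coefficient nonzero.

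The argument has two directions, and I will set it up so that both parts of the ``iff'' follow together. First, suppose some set of $w$ columns $\{\mathbf{h}_i : i\in S\}$, with $|S|=w$, is linearly dependent. Take a nontrivial relation $\sum_{i\in S}x_i\mathbf{h}_i=\mathbf{0}$ and set $x_i=0$ for $i\notin S$. The resulting vector $\mathbf{x}$ is a nonzero codeword with $\operatorname{wt}(\mathbf{x})\le w$. Conversely, a nonzero codeword of weight $w$ gives, by the observation above, a dependent set of exactly $w$ columns, namely those indexed by its support.

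From these two facts, the minimum weight $d$ equals the smallest size of a linearly dependent set of columns. If the least dependent set has size $d$, then no nonzero codeword has weight less than $d$, since its support would give a smaller dependent set. At the same time, a dependent $d$-set yields a nonzero codeword of weight at most $d$, hence exactly $d$. Conversely, if the minimum weight is $d$, a minimum-weight codeword gives a dependent $d$-set. There is also no dependent $(d-1)$-set, because such a set would produce a nonzero codeword of weight at most $d-1$. Note that ``no set of $d-1$ dependent columns'' implies no smaller dependent set either, since any superset of a dependent set is dependent whenever $n\ge d-1$.

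The only point needing care is the case where the relation from a dependent $w$-set uses fewer than $w$ nonzero coefficients. This gives weight $\le w$ rather than $=w$, and it is resolved by working with the minimum as above. I expect no real obstacle beyond this bookkeeping.
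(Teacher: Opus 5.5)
Your proof is correct. The paper gives no proof of its own; it cites the textbook result, whose proof uses the same correspondence you use between supports of codewords and linear dependencies among the columns of $H$. You also handle both subtleties correctly, namely that a dependent $w$-set only yields a codeword of weight at most $w$, and that a dependent set of size below $d-1$ extends to a dependent $(d-1)$-set. This closes both directions of the equivalence.
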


\section{3-designs from the linear code \(\mathcal{C}\) and its applications}

In this section, we mainly {determine the weight distribution of $\mathcal{C}$, show the connection between $\mathcal{C}$ and combinatorial $3$-designs, and further study some applications of $\mathcal{C}$}, where $\mathcal{C}$ is given in (\ref{eq:C}). We start with the following lemma.

\begin{lemma}\label{lem8}\cite{WangTangDing}
Let $p$ be an odd prime, $s \ge 1$ and $m \ge 2$ be positive integers. Let $q = p^m$, $x,y$ be two distinct elements in $U_{q+1} \setminus \{1\}$. Let
\begin{equation*}
T = \begin{vmatrix}
1, & 1, & 1 \\
1, & x, & y \\
1, & x^{p^s}, & y^{p^s}
\end{vmatrix}.
\end{equation*}
When $v_2(s) \le v_2(m)$, then the determinant $T \neq 0$. When $v_2(s) > v_2(m)$, then $T \neq 0$ if $x,y \notin U_{p^{\gcd(s,m)}+1}$.
\end{lemma}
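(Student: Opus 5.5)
The plan is to turn the vanishing of $T$ into an algebraic condition on a single ratio, and then use the norm-one condition $z^{q+1}=1$ to pin that ratio down.

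\textbf{Reduction.} Expanding the determinant (subtract the first row from the other rows) gives
\[
T=(x-1)(y^{p^s}-1)-(y-1)(x^{p^s}-1).
\]
In characteristic $p$ we have $z^{p^s}-1=(z-1)^{p^s}$. Since $x,y\neq 1$, $T=0$ is therefore equivalent to $(x-1)^{p^s-1}=(y-1)^{p^s-1}$. Setting
\[
u=\frac{x-1}{y-1}\in\mathbb{F}_{q^2}^*,
\]
this says $u^{p^s}=u$. So $u$ lies in $\mathbb{F}_{p^s}\cap\mathbb{F}_{q^2}=\mathbb{F}_{p^d}$ with $d=\gcd(s,2m)$.

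Next I use $x^q=x^{-1}$, which gives $(x-1)^q=x^{-1}-1=-(x-1)/x$, hence $(x-1)^{q-1}=-x^{-1}$. The same holds for $y$. Consequently
\[
u^{q-1}=\frac{y}{x}.
\]

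\textbf{Case $v_2(s)\le v_2(m)$.} Here $d=\gcd(s,2m)$ divides $m$, so $u\in\mathbb{F}_q$ and $u^{q-1}=1$. This forces $x=y$, contradicting $x\neq y$. Hence $T\neq0$.

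\textbf{Case $v_2(s)>v_2(m)$.} Now $d=2e$ with $e=\gcd(s,m)$, and $m/e$ is odd. So on $\mathbb{F}_{p^{2e}}$ the map $z\mapsto z^q$ coincides with $z\mapsto z^{p^e}$; write $\bar u=u^{p^e}$ and $N=u^{p^e+1}\in\mathbb{F}_{p^e}$. Assume $T=0$. I will show this forces $y\in U_{p^e+1}$, and by the symmetric argument $x\in U_{p^e+1}$, which proves the contrapositive.

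Substitute $x=1+u(y-1)$ into $x\cdot x^q=1$, using $x^q=1+\bar u(y^{-1}-1)$. This gives
\[
u(y-1)+\bar u(y^{-1}-1)+N(y-1)(y^{-1}-1)=0 .
\]
Since $(y-1)(y^{-1}-1)=-(y-1)^2/y$, dividing by $y-1\neq0$ and multiplying by $y$ yields
\[
y(u-N)=\bar u-N .
\]

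The one point needing care is $u=N$. In that case $\bar u=N$ as well, so $u\in\mathbb{F}_{p^e}$ and $N=u^2=u$, i.e. $u=1$ and then $x=y$, which is excluded. So $u\neq N$, and
\[
y=\frac{\bar u-N}{u-N}=\frac{\overline{v}}{v},\qquad v=u-N\in\mathbb{F}_{p^{2e}}^*,
\]
because $N$ is fixed by $z\mapsto z^{p^e}$. Any element of the form $\overline v/v$ with $v\in\mathbb{F}_{p^{2e}}^*$ satisfies $(\overline v/v)^{p^e+1}=1$, so $y\in U_{p^e+1}$.

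Exchanging the roles of $x$ and $y$ (using $u^{-1}$ in place of $u$) gives $x\in U_{p^e+1}$. Hence if $x,y\notin U_{p^{\gcd(s,m)}+1}$, then $T\neq0$.

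The main obstacle is exactly this second case: locating $y$ in $U_{p^e+1}$ needs the explicit substitution into the norm equation rather than just the relation $u^{q-1}=y/x$.
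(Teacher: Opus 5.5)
Your argument is correct. The paper gives no proof of this lemma; it is imported from \cite{WangTangDing} and used as a black box. So there is no in-paper argument to compare against, and your proof works as a self-contained replacement.

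All the steps check out:
\begin{itemize}
\item the factorization $T=(x-1)(y-1)\bigl[(y-1)^{p^s-1}-(x-1)^{p^s-1}\bigr]$, which comes from $z^{p^s}-1=(z-1)^{p^s}$;
\item the identification $\mathbb{F}_{p^s}\cap\mathbb{F}_{q^2}=\mathbb{F}_{p^{\gcd(s,2m)}}$, where $\gcd(s,2m)$ equals $\gcd(s,m)$ if $v_2(s)\le v_2(m)$ and $2\gcd(s,m)$ otherwise;
\item the relation $u^{q-1}=y/x$, which follows from $(z-1)^{q-1}=-z^{-1}$ for $z\in U_{q+1}\setminus\{1\}$;
\item the fact that $z\mapsto z^q$ agrees with $z\mapsto z^{p^e}$ on $\mathbb{F}_{p^{2e}}$ because $m/e$ is odd;
\item the norm substitution giving $y(u-N)=\bar u-N$, together with your separate handling of the degenerate case $u=N$.
\end{itemize}

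What your route buys is a conclusion stronger than the statement. When $v_2(s)>v_2(m)$ and $T=0$, you show that \emph{both} $x$ and $y$ lie in $U_{p^{\gcd(s,m)}+1}$. So the lemma holds under the strongest reading of the hypothesis: if even one of $x,y$ lies outside this subgroup, then $T\neq 0$.

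This is exactly what the paper needs in the proof of Lemma~\ref{lem7}. There it extracts only ``at least one of $x,y$ lies in $U_{p^\ell+1}$'' from the cited lemma. It then runs a separate argument to place the other element in the subgroup: $x^{p^s}=x$, hence $(x-1)(y^{p^s}-y)=0$, hence $y\in U_{p^\ell+1}$ via $\gcd(p^s-1,q+1)=p^\ell+1$. With your version of the lemma, that step becomes immediate.
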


\begin{lemma}\label{lem7}
Let $q=p^m$ be an odd prime power and $\mathcal{C}$ be the linear code over $\mathbb{F}_{q^2}$ given in (\ref{eq:C}), where $m\geq 2$. Then
$
d(\mathcal{C}^{\perp})\geq 4.
$
\end{lemma}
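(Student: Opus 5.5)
The plan is to apply Lemma \ref{lem4} to $\mathcal{C}^{\perp}$. The matrix $A$ in (\ref{eq:C070201}) generates $\mathcal{C}$, so it is a parity-check matrix of $\mathcal{C}^{\perp}$. Hence it suffices to show that any three distinct columns of $A$ are linearly independent over $\mathbb{F}_{q^2}$. Equivalently, the $4\times 3$ submatrix on any three positions must have rank $3$, i.e.\ at least one of its four $3\times 3$ minors is nonzero.

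\textbf{Normalization.} The column indexed by $x\in U_{q+1}$ is $(x^{k_1},x^{k_2},x^{-k_1},x^{-k_2})^T$. Since $k_2=k_1+1$, this equals $x^{k_1}(1,x,x^{-2k_1},x^{-2k_1-1})^T=x^{k_1}(1,x,x^{1-p^s},x^{-p^s})^T$. Rescaling a column by a nonzero scalar does not change linear (in)dependence, so we may work with the columns $v_x=(1,x,x^{1-p^s},x^{-p^s})^T$. Replacing $x$ by $\beta^j x$ multiplies the rows of the matrix $(v_x,v_y,v_z)$ by nonzero constants, which also preserves rank. So without loss of generality one of the three points is $1$, and the other two are distinct elements $x,y\in U_{q+1}\setminus\{1\}$.

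\textbf{The generic case.} Take rows $1,3,4$ and multiply each column by $x^{p^s}$. The resulting $3\times 3$ matrix has rows $(x^{p^s},x,1)$, which up to row and column permutations is exactly the determinant $T$ of Lemma \ref{lem8}. By Lemma \ref{lem8}, this minor is nonzero whenever $v_2(s)\le v_2(m)$. It is also nonzero when $v_2(s)>v_2(m)$ and $x,y\notin U_{p^{g}+1}$, where $g=\gcd(s,m)$. This settles most cases.

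\textbf{The exceptional case.} Assume $v_2(s)>v_2(m)$. Then $s/g$ is even and $p^g\equiv -1 \pmod{p^g+1}$, so $p^s\equiv 1\pmod{p^g+1}$. Hence every $z\in U_{p^g+1}$ satisfies $z^{p^s}=z$, and $v_z=(1,z,1,z^{-1})^T$.
\begin{itemize}
\item If both $x,y\in U_{p^g+1}$, use rows $1,2,4$. Multiplying each column by its point gives the rows $(z,z^2,1)$, a Vandermonde determinant in the distinct points $1,x,y$, which is nonzero.
\item The mixed case $x\in U_{p^g+1}$, $y\notin U_{p^g+1}$ is the step I expect to be the main obstacle. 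Here I would write down a putative dependency $u\,v_1+v\,v_x+w\,v_y=0$ directly.
\end{itemize}

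\textbf{Plan for the mixed case.} From the first two coordinates, $u+v+w=0$ and $u+vx+wy=0$. These give $(u,v,w)$ up to scalar as $(x-y,\,y-1,\,1-x)$. Substituting into the third and fourth coordinates (using $x^{1-p^s}=1$ and $x^{-p^s}=x^{-1}$) yields two equations:
\[
(y-1)(y^{1-p^s}-1)=0,\qquad (x-y)+(y-1)x^{-1}+(1-x)y^{-p^s}=0.
\]
\begin{itemize}
\item The first equation forces $y^{p^s-1}=1$ since $y\neq 1$. Then $y^{p^s}=y$, i.e.\ $y\in\mathbb{F}_{p^s}\cap U_{q+1}$. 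Since $y\in U_{q+1}$, we get $y^{p^{\gcd(2m,s)}}=y$ with $y^{q+1}=1$.
\item I would use this to deduce $y\in U_{p^g+1}$, a contradiction. If that deduction fails, I would fall back on the second equation, which with $y^{-p^s}=y^{-1}$ becomes $(x-y)(1-1/(xy))=0$. This forces $xy=1$, and then $y=x^{-1}\in U_{p^g+1}$, again a contradiction.
\end{itemize}
In all cases the three columns are independent, so Lemma \ref{lem4} gives $d(\mathcal{C}^{\perp})\ge 4$.
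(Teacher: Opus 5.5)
Your argument is essentially the paper's proof. You normalize one position to $1$ by the cyclic shift, turn rows $1,3,4$ into the determinant $T$ of Lemma~\ref{lem8}, and in the exceptional case use $x^{p^s}=x$ to force $y^{p^s}=y$ and finish with a Vandermonde-type contradiction. The deduction you left hedged, $y\in U_{p^g+1}$, does hold: the paper gets it from $\gcd(p^s-1,q+1)=p^{g}+1$ when $v_2(s)>v_2(m)$.

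Two computational slips in your mixed case need correcting, though neither breaks the proof.

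\begin{itemize}
\item With $(u,v,w)=(x-y,\,y-1,\,1-x)$, the third coordinate gives $(1-x)(y^{1-p^s}-1)=0$, not $(y-1)(y^{1-p^s}-1)=0$. The conclusion is the same, since $x\neq 1$.
\item After substituting $y^{-p^s}=y^{-1}$, the fourth coordinate is
\[
(x-y)+\frac{y-1}{x}+\frac{1-x}{y}=\frac{(x-y)(x-1)(y-1)}{xy},
\]
not $(x-y)(1-1/(xy))$. So your conclusion $xy=1$ is spurious. The correct expression is nonzero for distinct $x,y\neq 1$, which gives an immediate contradiction.
\end{itemize}

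The correct identity also simplifies your argument. Because you normalized the columns, once $x^{p^s}=x$ and $y^{p^s}=y$ all three columns have the form $(1,z,1,z^{-1})^T$, and your Vandermonde bullet applies verbatim. So you never need $y\in U_{p^g+1}$. The paper does need that membership, because it keeps the unnormalized columns and uses $z^{(p^s-1)/2}=1$.
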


\begin{proof}
Let $d$ denote the minimum Hamming distance of $\mathcal{C}^{\perp}$. Clearly, $d\geq 2$. We prove that $d\neq 2$ and $d\neq 3$.

Suppose first that $d=2$. Then there exist $a_1\in\mathbb{F}_{q^2}^{*}$ and $1\leq i\leq q$ such that
\begin{equation}\label{eq:oct1027}
\begin{cases}
1+a_1\beta^{\frac{(p^s-1)i}{2}}=0,\\
1+a_1\beta^{\frac{(p^s+1)i}{2}}=0,\\
1+a_1\beta^{-\frac{(p^s-1)i}{2}}=0,\\
1+a_1\beta^{-\frac{(p^s+1)i}{2}}=0.
\end{cases}
\end{equation}
The first two equations in (\ref{eq:oct1027}) give
$a_1\beta^{\frac{(p^s-1)i}{2}}=a_1\beta^{\frac{(p^s+1)i}{2}}$. Since $a_1\neq0$, we obtain $\beta^i=1$. However, $\beta$ has order $q+1$ and $1\leq i\leq q$, a contradiction. Hence $d\neq2$.

Suppose now that $d=3$. Then there exist $a_1,a_2\in\mathbb{F}_{q^2}^{*}$ and $1\leq i_1\neq i_2\leq q$ such that
\begin{equation}\label{eq:oct1019}
\begin{cases}
1+a_1\beta^{\frac{(p^s-1)i_1}{2}}
+a_2\beta^{\frac{(p^s-1)i_2}{2}}=0,\\
1+a_1\beta^{\frac{(p^s+1)i_1}{2}}
+a_2\beta^{\frac{(p^s+1)i_2}{2}}=0,\\
1+a_1\beta^{-\frac{(p^s-1)i_1}{2}}
+a_2\beta^{-\frac{(p^s-1)i_2}{2}}=0,\\
1+a_1\beta^{-\frac{(p^s+1)i_1}{2}}
+a_2\beta^{-\frac{(p^s+1)i_2}{2}}=0.
\end{cases}
\end{equation}
Let $x=\beta^{i_1}$ and $y=\beta^{i_2}$. Then $x$ and $y$ are distinct elements of $U_{q+1}\setminus\{1\}$, and (\ref{eq:oct1019}) becomes
\begin{equation}\label{eq:1013}
\begin{cases}
1+a_1x^{\frac{p^s-1}{2}}+a_2y^{\frac{p^s-1}{2}}=0,\\
1+a_1x^{\frac{p^s+1}{2}}+a_2y^{\frac{p^s+1}{2}}=0,\\
1+a_1x^{-\frac{p^s-1}{2}}+a_2y^{-\frac{p^s-1}{2}}=0,\\
1+a_1x^{-\frac{p^s+1}{2}}+a_2y^{-\frac{p^s+1}{2}}=0.
\end{cases}
\end{equation}

Taking the first, third, and fourth equations in (\ref{eq:1013}), we obtain
\[
\begin{vmatrix}
1 & x^{-\frac{p^s+1}{2}} & y^{-\frac{p^s+1}{2}}\\
1 & x^{-\frac{p^s-1}{2}} & y^{-\frac{p^s-1}{2}}\\
1 & x^{\frac{p^s-1}{2}} & y^{\frac{p^s-1}{2}}
\end{vmatrix}=0.
\]
Multiplying the second and third columns by $x^{\frac{p^s+1}{2}}$ and $y^{\frac{p^s+1}{2}}$, respectively, gives
\begin{equation}\label{eq:1031}
\begin{vmatrix}
1&1&1\\
1&x&y\\
1&x^{p^s}&y^{p^s}
\end{vmatrix}=0.
\end{equation}

Let $\ell=\gcd(m,s)$. By Lemma~\ref{lem8}, (\ref{eq:1031}) implies $v_2(s)>v_2(m)$, and at least one of $x$ and $y$ belongs to $U_{p^\ell+1}$. Without loss of generality, assume $x\in U_{p^\ell+1}$. Since $v_2(s)>v_2(m)$, the integer $s/\ell$ is even, and hence $p^\ell+1$ divides $p^s-1$. Thus $x^{p^s}=x$. Substituting this into (\ref{eq:1031}), we obtain
\[
(x-1)(y^{p^s}-y)=0.
\]
Since $x\neq1$, it follows that $y^{p^s}=y$, and hence $y^{p^s-1}=1$. Moreover,
$\gcd(p^s-1,q+1)=p^\ell+1$ when $v_2(s)>v_2(m)$. Since $y\in U_{q+1}$, we obtain $y\in U_{p^\ell+1}$. Therefore,
\[
x,y\in U_{p^\ell+1}.
\]

Since $s/\ell$ is even, $(p^s-1)/(p^\ell+1)$ is even. Hence, for $z\in U_{p^\ell+1}$,
$z^{\frac{p^s-1}{2}}=1$ and $z^{\frac{p^s+1}{2}}=z$. Therefore, (\ref{eq:1013}) reduces to
\[
\begin{cases}
1+a_1+a_2=0,\\
1+a_1x+a_2y=0,\\
1+a_1x^{-1}+a_2y^{-1}=0.
\end{cases}
\]
Solving the first two equations gives
$a_1=\frac{y-1}{x-y}$ and $a_2=\frac{x-1}{y-x}$. Substituting these into the third equation yields
\[
1+\frac{y-1}{x(x-y)}+\frac{x-1}{y(y-x)}=0.
\]
Since $x\neq y$, simplifying gives $xy=x+y-1$, or equivalently $(x-1)(y-1)=0$. This contradicts $x,y\neq1$. Hence $d\neq3$.

Hence, we have $d(\mathcal{C}^{\perp})\geq4$. This completes the proof.
\end{proof}

In order to obtain the parameters of $\mathcal{C}$ and its dual, we need the following lemma, which is given in  \cite{WangTangDing}; see also \cite{Helleseth2004} for related results on equations over finite fields.
\begin{lemma}\cite{WangTangDing}\label{conj-21march338}
Let $(a,b,c,d)
\in \mathbb{F}_{q^2}^4\setminus \{(0,0,0,0)\}$. When $y$ runs over $U_{q+1}$, {the possible numbers of solutions to the equation}
\begin{equation*}
ay+by^{p^s}+cy^{p^s+1}+d=0
\end{equation*}
{are} $0$, $1$, $2$ or $p^{\gcd(s,m)}+1$.
\end{lemma}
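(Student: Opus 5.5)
The plan is to reduce the equation to a Bluher-type equation $x^{p^s+1}+Ax+B=0$ over $\mathbb{F}_q$ by sending $U_{q+1}$ to the projective line $\mathrm{PG}(1,q)$. The whole argument rests on Bluher's theorem (not proved in this paper, quoted below).

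\textbf{Step 1: degenerate cases.} Write the equation as
\[
y^{p^s}(cy+b)=-(ay+d).
\]
If $b=c=0$, it reads $ay+d=0$ with $(a,d)\neq(0,0)$, so it has at most $1$ solution.

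If $ad-bc=0$ with $(b,c)\neq(0,0)$, then $ay+d=\lambda(cy+b)$ for some $\lambda\in\mathbb{F}_{q^2}$. The equation factors as
\[
(cy+b)\,(y^{p^s}+\lambda)=0.
\]
Each factor has at most one root, because $y\mapsto y^{p^s}$ is injective on $\mathbb{F}_{q^2}$. So there are at most $2$ solutions.

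From now on assume $ad-bc\neq0$. Then a solution $y$ cannot satisfy $cy+b=0$, so the solutions are exactly the $y\in U_{q+1}$ with $y^{p^s}=M(y)$, where $M(y)=-(ay+d)/(cy+b)$ is a genuine M\"obius transformation.

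\textbf{Step 2: transfer to $\mathrm{PG}(1,q)$.} Fix $\theta\in\mathbb{F}_{q^2}\setminus\mathbb{F}_q$ and let $\varphi(z)=(z+\theta)/(z+\theta^q)$. This is a M\"obius map sending $\mathrm{PG}(1,q)$ bijectively onto $U_{q+1}$; this is the identification behind Lemma~\ref{lem1}.

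For $z\in\mathrm{PG}(1,q)$ we have $\varphi(z)^{p^s}=\varphi^{\sigma}(z^{p^s})$. Here $\sigma$ is the $p^s$-Frobenius and $\varphi^{\sigma}$ is $\varphi$ with $\sigma$ applied to its coefficients. The equation therefore becomes
\[
z^{p^s}=N(z),\qquad N=(\varphi^{\sigma})^{-1}\circ M\circ\varphi,\qquad z\in\mathrm{PG}(1,q).
\]
Here $N$ is an invertible M\"obius map with coefficients in $\mathbb{F}_{q^2}$.

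Apply the $q$-Frobenius $\tau$ to this relation. Since $z$ is $\tau$-fixed, every solution also satisfies $z^{p^s}=N^{\tau}(z)$.

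\textbf{Step 3: split on whether $N$ is defined over $\mathbb{F}_q$.}
\begin{itemize}
\item If $N\neq N^{\tau}$ as elements of $\mathrm{PGL}(2,q^2)$, every solution satisfies $N(z)=N^{\tau}(z)$. Clearing denominators gives a nontrivial equation of degree at most $2$ in $z$ (counting $\infty$ projectively). Hence there are at most $2$ solutions.
\item If $N=N^{\tau}$ projectively, rescale its matrix so that $N\in\mathrm{PGL}(2,q)$. The equation $z^{p^s}=N(z)$ then becomes $\gamma z^{p^s+1}+\delta z^{p^s}-\alpha z-\beta=0$ over $\mathbb{F}_q$, to be solved on $\mathrm{PG}(1,q)$.
\end{itemize}

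\textbf{Step 4: apply Bluher's theorem.} In the second case, an $\mathbb{F}_q$-rational M\"obius substitution moves a possible root at $\infty$ and removes the $z^{p^s}$ term. This brings the equation to one of two forms:
\begin{itemize}
\item the linearized/affine form $z^{p^s}=\alpha' z+\beta'$;
\item Bluher's form $x^{p^s+1}+Ax+B=0$ with $A,B\in\mathbb{F}_q$.
\end{itemize}
Bluher's theorem says the number of roots in $\mathbb{F}_q$ of the second form lies in $\{0,1,2,p^{\gcd(s,m)}+1\}$. The affine case contributes at most $1$ or $p^{\gcd(s,m)}$ roots in $\mathbb{F}_q$, plus the point $\infty$. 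Together these give the values $0,1,2,p^{\gcd(s,m)}+1$.

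\textbf{Main obstacle.} The delicate step is the bookkeeping in Step 4. One must track the point $\infty$ consistently through the M\"obius normalisations, so that the affine/linearized subcase really yields $p^{\gcd(s,m)}+1$ rather than $p^{\gcd(s,m)}$. One must also check that the gcd produced is $\gcd(s,m)$, not $\gcd(s,2m)$. The descent to $\mathbb{F}_q$ in Step 3 is exactly what guarantees the latter.
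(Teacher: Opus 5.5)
The paper contains no proof of this lemma. It is imported from \cite{WangTangDing}, and later (Theorems~\ref{lem10} and~\ref{thm:C2}) only the fact that the zero count lies in $\{0,1,2,p^{\ell}+1\}$, $\ell=\gcd(s,m)$, is used. So your argument is an independent derivation rather than a reconstruction. It is correct modulo Bluher's theorem and the repairs below.

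What it gains over the bare citation is that the mechanism becomes visible. The M\"obius identification of $U_{q+1}$ with $\mathrm{PG}(1,q)$, which is the one behind Lemma~\ref{lem1}, turns the problem into $z^{p^s}=N(z)$. Then either $N\neq N^{\tau}$, and at most two solutions follow for free, or $N$ descends to $\mathrm{PGL}(2,q)$. In the second case the roots are counted in $\mathbb{F}_{p^m}$, which is exactly why the exponent is $\gcd(s,m)$ and not $\gcd(s,2m)$.

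\textbf{Step 1 determinant.} The relevant determinant is that of $\begin{pmatrix}a&d\\ c&b\end{pmatrix}$, namely $ab-cd$, not $ad-bc$. As written, the case split fails. For $(a,b,c,d)=(1,t,t,1)$ with $t\notin\{0,\pm1\}$, you have $ad-bc=1-t^2\neq0$, yet $ay+d=t^{-1}(cy+b)$, so $M$ is constant.

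\textbf{Descent of $N$.} Rescaling $N$ into $\mathrm{PGL}(2,q)$ needs a Hilbert~90 step. From $N^{\tau}=\mu N$ you get $\mu^{q+1}=1$. Write $\mu=\nu^{1-q}$; then $\nu N$ has entries in $\mathbb{F}_q$.

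\textbf{Step 4 is not a real obstacle.} The point $\infty$ is a root if and only if $\gamma=0$, which is precisely the affine case, so nothing needs to be moved.
\begin{itemize}
\item If $\gamma\neq0$, the translation $z\mapsto z-\delta/\gamma$ (with $\delta/\gamma\in\mathbb{F}_q$) fixes $\infty$ and removes the $z^{p^s}$ term. This gives Bluher's form with no root at $\infty$.
\item If $\gamma=0$, the finite roots are either empty or a coset of $\ker(z\mapsto z^{p^s}-\alpha'z)$ in $\mathbb{F}_q$. That kernel has size $1$ or $1+\gcd(p^s-1,q-1)=p^{\ell}$. Adding $\infty$, the totals are $1$, $2$ or $p^{\ell}+1$.
\end{itemize}

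\textbf{Degenerate Bluher cases.} If the version of Bluher's theorem you cite assumes $AB\neq0$, the remaining cases are elementary. For $B=0$ there are one or two roots. For $A=0$, $B\neq0$ there are $0$ or $\gcd(p^s+1,q-1)$ roots, and for odd $p$ this gcd lies in $\{2,p^{\ell}+1\}$.
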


\begin{theorem}\label{lem10}
Let $q=p^m$, where $p$ is an odd prime and $1\leq s<m$, and let $\ell=\gcd(m,s)$. Let $\mathcal{C}$ be the linear code over $\mathbb{F}_{q^2}$ given in (\ref{eq:C}). Then $\mathcal{C}$ has parameters $[q+1,4,q-p^\ell]_{q^2}$ and weight enumerator

\begin{equation*}
\begin{split}
&1+\frac{(q^2-1)^2q}{p^{3\ell} - p^\ell}z^{q-p^\ell}+\frac{(2p^\ell q^4+2q^4+p^\ell q^2+q^2+p^\ell-q)q(q-1)^2(q+1)}{2(1+p^\ell)}z^{q+1}\\
&+\frac{q((q^2+1)p^\ell- q^2-q)(q -1)(q+1)^2}{2(p^\ell-1)}z^{q-1}+((q^5-2q^3+q)p^{-\ell}+q^7-q^5+q^2-1)z^q.
\end{split}
\end{equation*}
Moreover, $\mathcal{C}^{\perp}$ has parameters ${[q+1,q-3,4]_{q^2}}$.
\end{theorem}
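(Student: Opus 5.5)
We will identify coordinates with $y=\beta^i\in U_{q+1}$. A codeword of $\mathcal{C}$ then has coordinate
\[
c_y=a y^{k_1}+b y^{k_2}+c y^{-k_1}+d y^{-k_2}.
\]
Multiplying by $y^{k_2}$, which is nonzero and does not change the support, this becomes
\[
a y^{p^s}+b y^{p^s+1}+c y+d.
\]
Here we used $k_1+k_2=p^s$, $k_2-k_1=1$ and $k_2-k_2=0$. Thus the weight of the codeword is $q+1-N(a,b,c,d)$, where $N(a,b,c,d)$ is the number of $y\in U_{q+1}$ with $cy+ay^{p^s}+by^{p^s+1}+d=0$.

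By Lemma~\ref{conj-21march338}, for $(a,b,c,d)\neq 0$ we have $N\in\{0,1,2,p^\ell+1\}$. In particular $N\le p^\ell+1<q+1$, so the map $(a,b,c,d)\mapsto$ codeword is injective and $\dim\mathcal{C}=4$. The possible nonzero weights are $q+1,q,q-1,q-p^\ell$, so the minimum distance is at most $q-p^\ell$, and it equals $q-p^\ell$ once we show the corresponding frequency is positive.

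Next we determine the four unknown frequencies $A_{q-p^\ell},A_{q-1},A_q,A_{q+1}$ by the Pless power moments of Lemma~\ref{lem3}, taken over $\mathbb{F}_{q^2}$ with $n=q+1$ and $k=4$. Lemma~\ref{lem7} gives $A_1^\perp=A_2^\perp=A_3^\perp=0$, so the first four moments (orders $0$ through $3$) have known right-hand sides. They give four linear equations in the four unknowns. The coefficient matrix is a Vandermonde-type matrix in the distinct weights, hence invertible, and we solve it. Routine algebra should produce the stated enumerator; the denominators $p^{3\ell}-p^\ell$, $2(p^\ell\pm1)$ come from the Vandermonde differences $(q-p^\ell)-(q-1)$, $(q-p^\ell)-(q+1)$, and so on. The displayed coefficient $\frac{(q^2-1)^2q}{p^{3\ell}-p^\ell}$ is positive, so $d=q-p^\ell$.

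For the dual, $\dim\mathcal{C}^\perp=q+1-4=q-3$, and Lemma~\ref{lem7} gives $d(\mathcal{C}^\perp)\ge 4$. To show equality, we will use the fifth moment in Lemma~\ref{lem3}. Substituting the now-known $A_i$ gives $A_4^\perp$ explicitly, and we check that it is positive. Alternatively, the Singleton bound gives $d^\perp\le 5$, and one checks $\mathcal{C}^\perp$ is not MDS, e.g.\ since $\mathcal{C}$ is not MDS. Indeed, $d(\mathcal{C})=q-p^\ell<q-2$, and the dual of an MDS code is MDS, so $d^\perp\neq 5$ and hence $d^\perp=4$. This second argument is cleaner and avoids computing $A_4^\perp$.

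The main obstacle is the moment computation. It is heavy but mechanical, and the real risk is that a weight listed as possible might not actually occur, which would make the system degenerate. Since we solve with all four weights present, the solution itself reveals each frequency, and the resulting positive values confirm that every weight occurs. I would first carry out the elimination symbolically in terms of $P=p^\ell$ and $q$, and check it against a small case such as $p=3$, $m=2$, $s=1$.
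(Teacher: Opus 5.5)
Your proposal is correct. For the dimension, the set of possible weights, and the four-moment linear system in $A_{q-p^\ell},A_{q-1},A_q,A_{q+1}$ (with $A_1^\perp=A_2^\perp=A_3^\perp=0$ from Lemma~\ref{lem7}), you follow the paper's argument. The paper writes out the right-hand sides $q^8-1$, $q^6(q^2-1)(q+1)$, $q^4(q^2-1)(q+1)(q^3+q^2-q)$ and $q^2(q^2-1)(q+1)\bigl(q^4(q+1)^2-2q^2(q+1)^2+3q^2(q+1)-q\bigr)$, then simply states the solution, so your level of detail is comparable.

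One slip: because every nonzero weight lies in $\{q-p^\ell,q-1,q,q+1\}$, you get $d\ge q-p^\ell$, not ``at most''.

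Where you depart from the paper is the step $d(\mathcal{C}^\perp)=4$. The paper substitutes the weight distribution into the fifth Pless moment to get $A_4^\perp=q(q^2-1)^2(p^\ell-2)/24>0$. Your preferred argument is different and also valid. The Singleton bound gives $d^\perp\le 5$. If $d^\perp=5$, then $\mathcal{C}^\perp$ would be MDS, hence so would $\mathcal{C}$, forcing $d(\mathcal{C})=q-2$. That contradicts $q-p^\ell\le q-3$.

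Your route shows the code parameters need no moment computation at all. The Singleton bound $d(\mathcal{C})\le q-2$ already forces $A_{q-p^\ell}>0$, because the other admissible weights $q-1,q,q+1$ all exceed $q-2$. Only the explicit enumerator then requires the moments.

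The paper's fifth-moment route costs more but produces the exact value of $A_4^\perp$. The paper needs that value immediately afterwards to compute $\lambda^\perp=p^\ell-2$ in Theorem~\ref{theorem1}. For the statement at hand, your version suffices.
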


\begin{proof}
Let $\boldsymbol{\alpha}_i$ be the $i$-th row of the matrix $A$ given in (\ref{eq:C070201}). We first prove that $\dim_{\mathbb{F}_{q^2}}(\mathcal{C})=4$. Suppose that $a\boldsymbol{\alpha}_1+b\boldsymbol{\alpha}_2+c\boldsymbol{\alpha}_3+d\boldsymbol{\alpha}_4=\boldsymbol{0}$, where $a,b,c,d\in\mathbb{F}_{q^2}$. Then, for every $y\in U_{q+1}$,
\[
ay^{\frac{p^s-1}{2}}+by^{\frac{p^s+1}{2}}+cy^{-\frac{p^s-1}{2}}+dy^{-\frac{p^s+1}{2}}=0.
\]
Multiplying by $y^{\frac{p^s+1}{2}}$, we obtain $cy+ay^{p^s}+by^{p^s+1}+d=0$. If $(a,b,c,d)\neq(0,0,0,0)$, then by Lemma \ref{conj-21march338}, this equation has at most $p^\ell+1$ solutions in $U_{q+1}$. Since $s<m$, we have $\ell<m$, and hence $p^\ell+1<q+1$. Therefore, the above equation cannot vanish at all $q+1$ elements of $U_{q+1}$, a contradiction. Hence $a=b=c=d=0$, and thus $\dim_{\mathbb{F}_{q^2}}(\mathcal{C})=4$.

For $\mathbf{c}=(a,b,c,d)A\in\mathcal{C}$,
\[
\operatorname{wt}(\mathbf{c})=q+1-\left|\left\{y\in U_{q+1}: ay^{\frac{p^s-1}{2}}+by^{\frac{p^s+1}{2}}+cy^{-\frac{p^s-1}{2}}+dy^{-\frac{p^s+1}{2}}=0\right\}\right|.
\]
As above, the zero equation is equivalent to $cy+ay^{p^s}+by^{p^s+1}+d=0$. Hence, by Lemma \ref{conj-21march338}, the possible numbers of zeros are $0,1,2$, and $p^\ell+1$. {Therefore, every nonzero weight of $\mathcal{C}$ belongs to $\{q+1,q,q-1,q-p^\ell\}$. The equality $d(\mathcal{C})=q-p^\ell$ will follow once we show that $A_{q-p^\ell}>0$.}

Let $A_{q+1},A_q,A_{q-1}$, and $A_{q-p^\ell}$ denote the numbers of codewords of the corresponding weights. By Lemma \ref{lem7}, $d(\mathcal{C}^{\perp})\geq4$, and hence $A_1^\perp=A_2^\perp=A_3^\perp=0$. Applying the first four Pless power moments in Lemma \ref{lem3} to the $[q+1,4]_{q^2}$ code $\mathcal{C}$, we obtain
\[
\begin{cases}
A_{q+1}+A_q+A_{q-1}+A_{q-p^\ell}=q^8-1,\\
(q+1)A_{q+1}+qA_q+(q-1)A_{q-1}+(q-p^\ell)A_{q-p^\ell}=q^6(q^2-1)(q+1),\\
(q+1)^2A_{q+1}+q^2A_q+(q-1)^2A_{q-1}+(q-p^\ell)^2A_{q-p^\ell}=q^4(q^2-1)(q+1)(q^3+q^2-q),\\
(q+1)^3A_{q+1}+q^3A_q+(q-1)^3A_{q-1}+(q-p^\ell)^3A_{q-p^\ell}\\
\qquad\qquad=q^2(q^2-1)(q+1)\bigl(q^4(q+1)^2-2q^2(q+1)^2+3q^2(q+1)-q\bigr).
\end{cases}
\]
Solving this system gives
\[
\begin{cases}
\displaystyle A_{q-p^\ell}=\frac{(q^2-1)^2q}{p^{3\ell}-p^\ell},\\[2mm]
\displaystyle A_{q-1}=\frac{q((q^2+1)p^\ell-q^2-q)(q-1)(q+1)^2}{2(p^\ell-1)},\\[2mm]
\displaystyle A_q=(q^5-2q^3+q)p^{-\ell}+q^7-q^5+q^2-1,\\[2mm]
\displaystyle A_{q+1}=\frac{(2p^\ell q^4+2q^4+p^\ell q^2+q^2+p^\ell-q)q(q-1)^2(q+1)}{2(1+p^\ell)}.
\end{cases}
\]
{Since $A_{q-p^\ell}=\dfrac{(q^2-1)^2q}{p^{3\ell}-p^\ell}>0$, the weight $q-p^\ell$ actually occurs, and hence $d(\mathcal C)=q-p^\ell$.}

It remains to determine the minimum distance of $\mathcal{C}^{\perp}$. Since $A_1^\perp=A_2^\perp=A_3^\perp=0$, the fifth Pless power moment in Lemma \ref{lem3}, together with the above weight distribution, yields
\[
A_4^\perp=\frac{q(q^2-1)^2(p^\ell-2)}{24}.
\]
Since $p$ is odd and $\ell\geq1$, we have $p^\ell\geq3$, so $A_4^\perp>0$. Thus $d(\mathcal{C}^{\perp})\leq4$. Combining this with Lemma \ref{lem7}, which gives $d(\mathcal{C}^{\perp})\geq4$, we obtain $d(\mathcal{C}^{\perp})=4$. Finally, $\dim(\mathcal{C}^{\perp})=(q+1)-4=q-3$, and hence $\mathcal{C}^{\perp}$ has parameters $[q+1,q-3,4]_{q^2}$. This completes the proof.
\end{proof}

\begin{proposition}\label{lem11}
Let the symbols be given as above. Let $\mathcal{B}_i(\mathcal{C})$ be the family of supports of all codewords of Hamming weight $i$ in $\mathcal{C}$. If $|\mathcal{B}_i(\mathcal{C})|\neq0$, then $(U_{q+1},\mathcal{B}_i(\mathcal{C}))$ supports a $3$-design.
\end{proposition}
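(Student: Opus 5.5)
The plan is to apply Lemma~\ref{lem2}. For that we need $\mathcal{B}_i(\mathcal{C})$ to be invariant under $\mathrm{Stab}_{U_{q+1}}$, which by Lemma~\ref{lem1} acts on $U_{q+1}$ as $\mathrm{PGL}(2,q)$ acts on $\mathrm{PG}(1,q)$. The codeword indexed by $(a,b,c,d)$ has coordinate at $y\in U_{q+1}$ equal to
\[
f(y)=ay^{k_1}+by^{k_2}+cy^{-k_1}+dy^{-k_2}=y^{-k_2}\bigl(cy+ay^{p^s}+by^{p^s+1}+d\bigr).
\]
Since $y^{-k_2}\neq0$, the support of this codeword is the complement in $U_{q+1}$ of the zero set of $g(y)=by^{p^s+1}+ay^{p^s}+cy+d$. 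As $(a,b,c,d)$ ranges over $\mathbb{F}_{q^2}^4$, the polynomial $g$ ranges over all polynomials of the form $\lambda y^{p^s+1}+\mu y^{p^s}+\nu y+\rho$. So it suffices to show that for each $\sigma\in\mathrm{Stab}_{U_{q+1}}$ and each such $g$ there is another polynomial $\tilde g$ of the same form with
\[
\{y\in U_{q+1}:\tilde g(y)=0\}=\sigma^{-1}\bigl(\{y\in U_{q+1}:g(y)=0\}\bigr).
\]
Because $\sigma$ is a bijection of $U_{q+1}$, this preserves the weight, and hence $\mathcal{B}_i(\mathcal{C})$ is mapped onto itself.

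For the key computation, take $\sigma(y)=\frac{\alpha y+\beta'}{\gamma y+\delta}$ with $\alpha\delta-\beta'\gamma\neq0$; by Lemma~\ref{lem1} we may take $\alpha=a_0^q$, $\beta'=b_0^q$, $\gamma=b_0$, $\delta=a_0$. We substitute $\sigma(y)$ into $g$ and clear the denominator $(\gamma y+\delta)^{p^s+1}$. Using that $z\mapsto z^{p^s}$ is additive in characteristic $p$, we get $(\alpha y+\beta')^{p^s}=\alpha^{p^s}y^{p^s}+\beta'^{p^s}$, and similarly for $(\gamma y+\delta)^{p^s}$. Then
\[
(\gamma y+\delta)^{p^s+1}g(\sigma(y))=\lambda(\alpha y+\beta')^{p^s}(\alpha y+\beta')+\mu(\alpha y+\beta')^{p^s}(\gamma y+\delta)+\nu(\alpha y+\beta')(\gamma y+\delta)^{p^s}+\rho(\gamma y+\delta)^{p^s}(\gamma y+\delta).
\]
Each term is a product of a $p^s$-linear factor $(uy^{p^s}+v)$ and an affine factor $(wy+z)$. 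Expanding gives only the monomials $y^{p^s+1},y^{p^s},y,1$, so the right-hand side is a polynomial $\tilde g$ of the required form.

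Next we verify the zero sets correspond. For $y\in U_{q+1}$ with $\gamma y+\delta\neq0$, $\sigma(y)\in U_{q+1}$ is finite and $\tilde g(y)=0$ iff $g(\sigma(y))=0$. Since $\sigma$ stabilizes $U_{q+1}\subset\mathbb{F}_{q^2}$, it never sends a point of $U_{q+1}$ to $\infty$; so $\gamma y+\delta\neq0$ for all $y\in U_{q+1}$ (indeed $b_0y+a_0=0$ with $y^{q+1}=1$ would force $a_0^{q+1}=b_0^{q+1}$). Thus the zero set of $\tilde g$ in $U_{q+1}$ is exactly $\sigma^{-1}$ of the zero set of $g$. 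We must also have $\tilde g\neq0$ when $g\neq0$. This holds because the map $g\mapsto\tilde g$ is linear and invertible, with inverse given by $\sigma^{-1}$; alternatively, it follows from the weight being preserved.

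Finally, since $\mathcal{B}_i(\mathcal{C})$ is invariant under a group acting $3$-transitively on the coordinate set $U_{q+1}$, Lemma~\ref{lem2} (or directly, $3$-homogeneity) gives a $3$-design whenever $\mathcal{B}_i(\mathcal{C})$ is nonempty. The main obstacle is the bookkeeping in the substitution step, namely confirming that no monomials other than $y^{p^s+1},y^{p^s},y,1$ arise. The Frobenius additivity makes this clean, so I expect the remaining concern to be only the identification of $\mathrm{Stab}_{U_{q+1}}$ with $\mathrm{PGL}(2,q)$, which is supplied by Lemma~\ref{lem1}.
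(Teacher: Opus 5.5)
Your proposal is correct and takes essentially the same route as the paper. Both reduce to support invariance under $\mathrm{Stab}_{U_{q+1}}$, substitute $x\mapsto (a^qx+b^q)/(bx+a)$, and clear the denominator $(bx+a)^{p^s+1}$. Additivity of $z\mapsto z^{p^s}$ then returns the expression to the span of $x^{p^s+1},x^{p^s},x,1$, and the cleared factor never vanishes on $U_{q+1}$ because $a^{q+1}\neq b^{q+1}$. Your single uniform computation subsumes the paper's separate cases $b=0$ and $a=0$. Your check that $\tilde g\neq 0$ is harmless but unnecessary, since a codeword whose support is $\sigma^{-1}(B)$ with $B$ nonempty is automatically nonzero.
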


\begin{proof}
{By Lemma \ref{lem1}, $\operatorname{Stab}_{U_{q+1}}$ acts sharply $3$-transitively on $U_{q+1}$. In view of Lemma \ref{lem2}, it is enough to prove that the family of supports $\mathcal B_i(\mathcal C)$ is invariant under this action. Thus, for each $\pi\in\operatorname{Stab}_{U_{q+1}}$ and each $\mathbf c\in\mathcal C$, it suffices to find a codeword $\overline{\mathbf c}\in\mathcal C$ such that
\[
\operatorname{supp}(\pi(\mathbf c))=\operatorname{supp}(\overline{\mathbf c}).
\]
}
From the definition of $\operatorname{Stab}_{U_{q+1}}$, $\pi$ can be expressed as
 \(\pi=\begin{pmatrix}a^q&b^q\\ b&a\end{pmatrix}\in \operatorname{Stab}_{U_{q+1}}\). For any \(x\in U_{q+1}\), represent the corresponding projective point by
$
\binom{x}{1}.
$
Then
\[
\pi\binom{x}{1}
=
\binom{a^qx+b^q}{bx+a}
=
(bx+a)
\binom{\dfrac{a^qx+b^q}{bx+a}}{1}.
\]
Therefore, the action induced by \(\pi\) on \(U_{q+1}\) is
\begin{equation}\label{eq:pib}
\pi\, :\, x\longmapsto \frac{a^qx+b^q}{bx+a}.
\end{equation}
By definition, for any codeword $\mathbf{c}\in \mathcal{C}$, we have
\[
\mathbf{c}(c_1,c_2,c_3,c_4)=\left(
c_1x^{\frac{p^s+1}{2}}
+c_2x^{\frac{p^s-1}{2}}
+c_3x^{-\frac{p^s+1}{2}}
+c_4x^{-\frac{p^s-1}{2}}
\right)_{x \in U_{q+1}},
\]
where $c_1,c_2,c_3,c_4 \in \mathbb{F}_{q^2}$. {Let
\[
f(x)=c_1x^{\frac{p^s+1}{2}}+c_2x^{\frac{p^s-1}{2}}+c_3x^{-\frac{p^s+1}{2}}+c_4x^{-\frac{p^s-1}{2}}.
\]
Then $\mathbf{c}(c_1,c_2,c_3,c_4)=(f(x))_{x\in U_{q+1}}$. We verify the required support invariance by considering three cases.}

\textbf{Case 1}: $a\neq0$ and $b=0$.
In this case, (\ref{eq:pib}) becomes $\pi:x\mapsto a^{q-1}x$, where $a^{q-1}\in U_{q+1}$. Therefore,
\[
f(a^{q-1}x)
=c_1(a^{q-1})^{\frac{p^s+1}{2}}x^{\frac{p^s+1}{2}}
+c_2(a^{q-1})^{\frac{p^s-1}{2}}x^{\frac{p^s-1}{2}}
+c_3(a^{q-1})^{-\frac{p^s+1}{2}}x^{-\frac{p^s+1}{2}}
+c_4(a^{q-1})^{-\frac{p^s-1}{2}}x^{-\frac{p^s-1}{2}}.
\]
Hence,
\[
\pi(\mathbf{c}(c_1,c_2,c_3,c_4))
=
\mathbf{c}\left(
c_1(a^{q-1})^{\frac{p^s+1}{2}},
c_2(a^{q-1})^{\frac{p^s-1}{2}},
c_3(a^{q-1})^{-\frac{p^s+1}{2}},
c_4(a^{q-1})^{-\frac{p^s-1}{2}}
\right)\in\mathcal{C}.
\]

\textbf{Case 2}: $a=0$ and $b\neq0$.
In this case, (\ref{eq:pib}) becomes $\pi:x\mapsto b^{q-1}x^{-1}$. A direct calculation gives
\[
f(b^{q-1}x^{-1})
=c_3(b^{1-q})^{\frac{p^s+1}{2}}x^{\frac{p^s+1}{2}}
+c_4(b^{1-q})^{\frac{p^s-1}{2}}x^{\frac{p^s-1}{2}}
+c_1(b^{1-q})^{-\frac{p^s+1}{2}}x^{-\frac{p^s+1}{2}}
+c_2(b^{1-q})^{-\frac{p^s-1}{2}}x^{-\frac{p^s-1}{2}}.
\]
Thus,
\[
\pi(\mathbf{c}(c_1,c_2,c_3,c_4))
=
\mathbf{c}\left(
c_3(b^{1-q})^{\frac{p^s+1}{2}},
c_4(b^{1-q})^{\frac{p^s-1}{2}},
c_1(b^{1-q})^{-\frac{p^s+1}{2}},
c_2(b^{1-q})^{-\frac{p^s-1}{2}}
\right)\in\mathcal{C}.
\]

\textbf{Case 3}: $a\neq0$ and $b\neq0$.
Let $A=a^qx+b^q$ and $B=bx+a$. Then
\[
f\left(\frac{A}{B}\right)
=
\frac{c_1A^{p^s+1}+c_2A^{p^s}B+c_3B^{p^s+1}+c_4AB^{p^s}}
{A^{\frac{p^s+1}{2}}B^{\frac{p^s+1}{2}}}.
\]
Since
$A^{p^s}=a^{qp^s}x^{p^s}+b^{qp^s}$ and
$B^{p^s}=b^{p^s}x^{p^s}+a^{p^s}$, we have
\[
\begin{aligned}
A^{p^s+1}&=a^{q(p^s+1)}x^{p^s+1}+a^{qp^s}b^qx^{p^s}+a^qb^{qp^s}x+b^{q(p^s+1)},\\
A^{p^s}B&=a^{qp^s}bx^{p^s+1}+a^{qp^s+1}x^{p^s}+b^{qp^s+1}x+ab^{qp^s},\\
B^{p^s+1}&=b^{p^s+1}x^{p^s+1}+ab^{p^s}x^{p^s}+a^{p^s}bx+a^{p^s+1},\\
AB^{p^s}&=a^qb^{p^s}x^{p^s+1}+b^{q+p^s}x^{p^s}+a^{q+p^s}x+a^{p^s}b^q.
\end{aligned}
\]
Hence,
\[
c_1A^{p^s+1}+c_2A^{p^s}B+c_3B^{p^s+1}+c_4AB^{p^s}
=\bar c_1x^{p^s+1}+\bar c_2x^{p^s}+\bar c_4x+\bar c_3,
\]
where
\[
\begin{aligned}
\bar c_1&=c_1a^{q(p^s+1)}+c_2a^{qp^s}b+c_3b^{p^s+1}+c_4a^qb^{p^s},\\
\bar c_2&=c_1a^{qp^s}b^q+c_2a^{qp^s+1}+c_3ab^{p^s}+c_4b^{q+p^s},\\
\bar c_3&=c_1b^{q(p^s+1)}+c_2ab^{qp^s}+c_3a^{p^s+1}+c_4a^{p^s}b^q,\\
\bar c_4&=c_1a^qb^{qp^s}+c_2b^{qp^s+1}+c_3a^{p^s}b+c_4a^{q+p^s}.
\end{aligned}
\]
Since
\[
\bar c_1x^{p^s+1}+\bar c_2x^{p^s}+\bar c_4x+\bar c_3
=
x^{\frac{p^s+1}{2}}
\left(
\bar c_1x^{\frac{p^s+1}{2}}
+\bar c_2x^{\frac{p^s-1}{2}}
+\bar c_3x^{-\frac{p^s+1}{2}}
+\bar c_4x^{-\frac{p^s-1}{2}}
\right),
\]
we obtain
\[
f\left(\frac{a^qx+b^q}{bx+a}\right)
=
\rho(x)
\left(
\bar c_1x^{\frac{p^s+1}{2}}
+\bar c_2x^{\frac{p^s-1}{2}}
+\bar c_3x^{-\frac{p^s+1}{2}}
+\bar c_4x^{-\frac{p^s-1}{2}}
\right),
\]
where
\[
\rho(x)=
\frac{x^{\frac{p^s+1}{2}}}
{(a^qx+b^q)^{\frac{p^s+1}{2}}(bx+a)^{\frac{p^s+1}{2}}}.
\]
Since $\pi\in\operatorname{Stab}_{U_{q+1}}$, we have $a^{q+1}\neq b^{q+1}$, which implies $bx+a\neq0$ and $a^qx+b^q=x(bx+a)^q\neq0$ for all $x\in U_{q+1}$, and hence $\rho(x)\neq0$. Therefore, letting $\overline{\mathbf c}=\mathbf c(\bar c_1,\bar c_2,\bar c_3,\bar c_4)\in\mathcal C$, the above equality gives $\operatorname{supp}(\pi(\mathbf c))=\operatorname{supp}(\overline{\mathbf c})$.

{In summary, for every $\pi\in\operatorname{Stab}_{U_{q+1}}$ and every $\mathbf c\in\mathcal C$, there exists $\overline{\mathbf c}\in\mathcal C$ with $\operatorname{supp}(\pi(\mathbf c))=\operatorname{supp}(\overline{\mathbf c})$. Hence $\mathcal B_i(\mathcal C)$ is invariant under $\operatorname{Stab}_{U_{q+1}}$. By Lemmas \ref{lem1} and \ref{lem2}, $(U_{q+1},\mathcal B_i(\mathcal C))$ is a $3$-design whenever $\mathcal B_i(\mathcal C)\neq\varnothing$. This completes the proof.}
\end{proof}

\begin{remark}\label{rem:additiveC2}
In \cite{WangTangDing}, the authors proved that the minimum-weight codewords of $\mathcal{C}_1$ support a $3$-design.
 Using the same method as in Proposition \ref{lem11}, we can also show that the codewords of every nonzero weight in $\mathcal{C}_1$ support a $3$-design.
\end{remark}

\begin{theorem}\label{theorem1}
Let the symbols be given as in Lemma \ref{lem10}. Then the minimum-weight codewords of $\mathcal{C}$ support a $3$-$(q+1,q-p^\ell,\lambda)$ design, where
\[
\lambda=
\frac{(q-p^\ell)(q-p^\ell-1)(q-p^\ell-2)}
{p^{3\ell}-p^\ell}.
\]
Moreover, the minimum-weight codewords of $\mathcal{C}^{\perp}$ support a $3$-$(q+1,4,p^\ell-2)$ design.
\end{theorem}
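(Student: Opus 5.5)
The plan is to get the design property from Proposition~\ref{lem11} (for $\mathcal{C}$) and from an analogous invariance argument or Assmus--Mattson (for $\mathcal{C}^{\perp}$). The parameter $\lambda$ then follows by counting blocks and applying (\ref{eq:design}). The key point is that for codes over $\mathbb{F}_{q^2}$ of these weights, two codewords with the same support are scalar multiples of each other. Hence $b=A_w/(q^2-1)$ distinct blocks.

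For $\mathcal{C}$, Proposition~\ref{lem11} already shows that the supports of weight $q-p^\ell$ form a $3$-design. To count the blocks, suppose $\mathbf{c},\mathbf{c}'$ have weight $d=q-p^\ell$ and the same support. Choose $\mu$ so that $\mathbf{c}-\mu\mathbf{c}'$ vanishes at one further support coordinate. This codeword has weight less than $d$, so it is zero. Thus each block corresponds to exactly $q^2-1$ codewords, and
\[
b=\frac{A_{q-p^\ell}}{q^2-1}=\frac{(q^2-1)q}{p^{3\ell}-p^\ell}.
\]
Plugging into $\binom{q+1}{3}\lambda=\binom{q-p^\ell}{3}b$ gives $\lambda=\frac{(q-p^\ell)(q-p^\ell-1)(q-p^\ell-2)}{p^{3\ell}-p^\ell}$, since $\binom{q+1}{3}=\frac{(q+1)q(q-1)}{6}$ and $b\cdot 6/((q+1)q(q-1))=1/(p^{3\ell}-p^\ell)$.

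For $\mathcal{C}^{\perp}$, I would show that the weight-$4$ supports are invariant under $\operatorname{Stab}_{U_{q+1}}$. In Proposition~\ref{lem11}, each $\pi$ sends $\mathbf{c}$ to $\rho\cdot\overline{\mathbf{c}}$, where $\rho=(\rho(x))$ is a coordinatewise nonzero scaling and the map $\mathbf{c}\mapsto\overline{\mathbf{c}}$ is a bijection of $\mathcal{C}$. Equivalently, $\mathcal{C}$ is invariant under the monomial map $P_\pi D$ for a diagonal $D$. Then $\mathcal{C}^{\perp}$ is invariant under $P_\pi D^{-1}$, which is the inverse transpose of a monomial map. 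Monomial maps preserve supports up to the permutation $\pi$, so $\mathcal{B}_4(\mathcal{C}^{\perp})$ is invariant, and Lemma~\ref{lem2} gives a $3$-design.

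Alternatively, Assmus--Mattson with $t=3$ would need $s\le d-3=1$, where $s$ counts the nonzero $A_i$ of $\mathcal{C}$ with $i\le q-2$. Here $s=1$ (only $q-p^\ell$, using $p^\ell\ge3$), and $d(\mathcal{C}^{\perp})=4$, so the roles fit with $\mathcal{C}^{\perp}$ as the base code. This is a cleaner route that avoids checking invariance, and it also confirms simplicity.

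To count blocks of $\mathcal{C}^{\perp}$, note that weight-$4$ codewords with equal supports are proportional by the same argument, since the minimum distance is $4$. Thus
\[
b^{\perp}=\frac{A_4^\perp}{q^2-1}=\frac{q(q^2-1)(p^\ell-2)}{24}.
\]
Then $\binom{q+1}{3}\lambda^{\perp}=\binom{4}{3}b^{\perp}$ gives $\lambda^{\perp}=\frac{4\cdot 6\, q(q^2-1)(p^\ell-2)}{24\,(q+1)q(q-1)}=p^\ell-2$.

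The main obstacle is the structural step: either the inverse-transpose invariance argument or checking the Assmus--Mattson hypotheses precisely. The arithmetic is straightforward once the proportionality-of-supports observation is in place.
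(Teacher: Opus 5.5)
Your proposal is correct and follows the paper's proof. Proposition~\ref{lem11} gives the design for $\mathcal{C}$, Assmus--Mattson with base code $\mathcal{C}^{\perp}$ gives it for $\mathcal{C}^{\perp}$, and $b=A_w/(q^2-1)$ is substituted into (\ref{eq:design}). Your proportionality argument for equal-support minimum-weight codewords and your check that $s=1\le d-t=1$ supply details the paper leaves implicit, and your monomial inverse-transpose alternative is also valid. One cosmetic slip: $6b/((q+1)q(q-1))$ equals $6/(p^{3\ell}-p^\ell)$, not $1/(p^{3\ell}-p^\ell)$, although your final $\lambda$ is correct.
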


\begin{proof}
By Proposition \ref{lem11} and Theorem \ref{lem10}, the minimum weight of $\mathcal{C}$ is $q-p^\ell$, and the codewords of weight $q-p^\ell$ support a $3$-$(q+1,q-p^\ell,\lambda)$ design. Hence, the number of blocks is $$b=\frac{A_{q-p^\ell}}{q^2-1}=\frac{(q^2-1)q}{p^{3\ell}-p^\ell}.$$ By (\ref{eq:design}), we obtain
$
\lambda\binom{q+1}{3}
=
b\binom{q-p^\ell}{3},
$
and therefore,
\[
\lambda
=
\frac{b\binom{q-p^\ell}{3}}{\binom{q+1}{3}}
=
\frac{(q-p^\ell)(q-p^\ell-1)(q-p^\ell-2)}
{p^{3\ell}-p^\ell}.
\]

We next consider $\mathcal{C}^{\perp}$. Since $\mathcal{C}^{\perp}=[q+1,q-3,4]$, by the Assmus--Mattson theorem, the codewords of weight $4$ in $\mathcal{C}^{\perp}$ support a $3$-$(q+1,4,\lambda^\perp)$ design. The number of blocks is $b^\perp=\frac{A_4^\perp}{q^2-1}=\frac{q(q^2-1)(p^\ell-2)}{24}$. By (\ref{eq:design}),
$
\lambda^\perp\binom{q+1}{3}
=
b^\perp\binom{4}{3}.
$
Thus,
\[
\lambda^\perp
=
\frac{q(q^2-1)(p^\ell-2)}{24}
\frac{4}{\binom{q+1}{3}}
=
p^\ell-2.
\]
We obtain, the minimum-weight codewords of $\mathcal{C}^{\perp}$ support a $3$-$(q+1,4,p^\ell-2)$ design.
\end{proof}

In the following, we give some examples for Theorem \ref{lem10}  and Theorem \ref{theorem1}. These results are consistent with those obtained by Magma.

\begin{example}
Let $p=3$, $s=1$ and $m=2$. Then $q=9$, $\ell=\gcd(m,s)=1$, and
$\mathcal{C}$ has parameters $[10,4,6]$ and weight enumerator
\[
1+2400x^6+280800x^8+4743200x^9+38020320x^{10}.
\]
The codewords of minimum weight $6$ in $\mathcal{C}$ support a
$3$-$(10,6,5)$ design, and the codewords of minimum weight $4$ in $\mathcal{C}^{\perp}$ support a
$3$-$(10,4,1)$ design.
\end{example}

\begin{example}
Let $p=5$, $s=1$ and $m=2$. Then $q=25$, $\ell=\gcd(m,s)=1$, and
$\mathcal{C}$ has parameters $[26,4,20]$ and weight enumerator
\[
1+81120x^{20}+125736000x^{24}
+6095697504x^{25}+146366376000x^{26}.
\]
The codewords of minimum weight $20$ in $\mathcal{C}$ support a
$3$-$(26,20,57)$ design, and the codewords of minimum weight $4$ in $\mathcal{C}^{\perp}$ support a
$3$-$(26,4,3)$ design.
\end{example}

\begin{example}
Let $p=3$, $s=2$ and $m=4$. Then $q=81$, $\ell=\gcd(m,s)=2$, and
$\mathcal{C}$ has parameters $[82,4,72]$ and weight enumerator
\[
1+4841280x^{72}+142740299520x^{80}
+22873692979520x^{81}+1830003750731520x^{82}.
\]
The codewords of minimum weight $72$ in $\mathcal{C}$ support a
$3$-$(82,72,497)$ design, and the codewords of minimum weight $4$ in $\mathcal{C}^{\perp}$ support a
$3$-$(82,4,7)$ design.
\end{example}

We now investigate the applications of the linear code $\mathcal{C}$ constructed above.
Let $[[n,k,d;c]]_q$ denote a $q$-ary EAQECC which encodes $k$ logical $q$-ary quantum symbols into $n$ physical symbols with minimum distance $d$ by means of $c$ pairs of maximally entangled states. For a matrix $H$ over $\mathbb{F}_{q^2}$, its Hermitian conjugate transpose is denoted by $H^\dagger=(H^q)^T$.

\begin{lemma}\label{lem:EAQECC}
{\cite{WildeBrun}} Let $D$ be an {$[n,k,d]_{q^2}$ linear code} with parity-check matrix $H$. Then there exists a $q$-ary EAQECC with parameters $[[n,2k-n+c,d;c]]_q$, where $c=\operatorname{rank}(HH^\dagger)$.
\end{lemma}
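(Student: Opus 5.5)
The plan is to realize the EAQECC through the entanglement-assisted stabilizer formalism, using the Hermitian form over $\mathbb{F}_{q^2}$ in the usual way. Fix a basis $\{1,\omega\}$ of $\mathbb{F}_{q^2}$ over $\mathbb{F}_q$. Then each vector $\mathbf{v}\in\mathbb{F}_{q^2}^n$ corresponds to a pair $(\mathbf{a}|\mathbf{b})\in\mathbb{F}_q^{2n}$, and hence to a generalized Pauli operator on $n$ qudits. Under this identification, the trace-Hermitian form $\operatorname{Tr}_{q^2/q}\!\bigl(\omega'\langle\mathbf{u},\mathbf{v}^q\rangle\bigr)$ (with a suitable $\omega'$) becomes the symplectic form on $\mathbb{F}_q^{2n}$. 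Two operators commute exactly when this symplectic form vanishes.

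\textbf{Step 1 (stabilizer group).} The $n-k$ rows of $H$, together with their $\omega$-multiples, give $2(n-k)$ generators over $\mathbb{F}_q$ of a (generally non-abelian) group $\mathcal{S}$. I will show that the $2(n-k)\times 2(n-k)$ symplectic Gram matrix of these generators has $\mathbb{F}_q$-rank $2\operatorname{rank}(HH^\dagger)=2c$. The reason is that the $\mathbb{F}_{q^2}$-sesquilinear Gram matrix is $HH^\dagger$, and taking the trace form doubles its rank over $\mathbb{F}_q$.

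\textbf{Step 2 (symplectic Gram--Schmidt and ebits).} By symplectic Gram--Schmidt, $\mathcal{S}$ admits a canonical generating set consisting of $2(n-k)-2c$ isotropic generators and $c$ hyperbolic pairs. For each hyperbolic pair I append one ebit, i.e.\ one extra qudit held by the receiver, so that the pair extends to commuting operators on $n+c$ qudits. This yields an abelian stabilizer with $2(n-k)-c$ independent generators on $n+c$ qudits. The number of encoded qudits is therefore $(n+c)-\bigl(2(n-k)-c\bigr)=2k-n+c$.

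\textbf{Step 3 (distance).} An error $E$ acting only on the sender's $n$ qudits is undetectable iff it commutes with every extended generator. Restricted to the sender's side, this means the corresponding vector $\mathbf{e}$ satisfies $H\mathbf{e}^{T}=\mathbf{0}$ (after applying the Frobenius twist). Thus $\mathbf{e}$ lies in $D$ (or its conjugate), so every nonzero undetectable error has weight at least $d$, which gives minimum distance $d$.

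I expect the main obstacle to be the rank computation in Step~1. There one must check carefully that the trace-symplectic form on the $\mathbb{F}_q$-span of the rows of $H$ and $\omega H$ has rank exactly twice $\operatorname{rank}_{\mathbb{F}_{q^2}}(HH^\dagger)$. My first attempt will be to write the Gram matrix as the image of $HH^\dagger$ under the $\mathbb{F}_q$-linear embedding $\mathbb{F}_{q^2}\to M_2(\mathbb{F}_q)$ composed with the trace form, and to use the nondegeneracy of the trace pairing.
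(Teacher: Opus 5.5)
The paper gives no proof of this lemma; it only cites Wilde--Brun. Your entanglement-assisted stabilizer route is the standard argument behind that citation (Wilde--Brun treat $q=2$; the $q$-ary case goes the same way), and your Step~1 is exactly their ebit count. Step~1 is fine. For the $\mathbb{F}_{q^2}$-linear row space $V$ of $H$, nondegeneracy of $\operatorname{Tr}_{q^2/q}$ shows that the radical of the trace-symplectic form on $V$ is $V\cap V^{\perp_h}$. This has $\mathbb{F}_q$-dimension $2(n-k-c)$, so the Gram matrix has $\mathbb{F}_q$-rank $2c$.

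\textbf{Step~2 is wrong as written.} You reach $2k-n+c$ only because two mistakes cancel. First, attaching an ebit to each hyperbolic pair removes no generator. The abelian stabilizer on $n+c$ qudits keeps all $2(n-k)$ independent generators: the $2(n-k)-2c$ isotropic ones padded with the identity, plus the $2c$ extended hyperbolic ones. It does not have $2(n-k)-c$. Second, $(n+c)-\bigl(2(n-k)-c\bigr)=2k-n+2c$, not $2k-n+c$. The correct count is $(n+c)-2(n-k)=2k-n+c$.

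\textbf{Step~3 should state exactly what it proves.} Sender-side errors that commute with the extended stabilizer correspond to $V^{\perp_h}=D^q$. Those acting trivially on the code are exactly the radical $V\cap V^{\perp_h}=D^{\perp}\cap D^q$. So you obtain distance at least $d$, with equality unless every weight-$d$ word of $D^q$ lies in that radical; this is the usual pure reading of the lemma. In the paper's application $AA^\dagger=I_4$ is nonsingular, so the radical is $\{\mathbf{0}\}$ and the distance is exactly $4$.

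For $q=p^m$ with $m\geq 2$, qudit Paulis commute according to $\operatorname{Tr}_{q/p}$ of your form, not the form itself. Since all your groups are $\mathbb{F}_q$-subspaces, isotropy for the two forms coincides. So working with the $\mathbb{F}_q$-valued form and counting $\mathbb{F}_q$-dimensions is legitimate, but you should say so.
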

If $c=n-k$, then the corresponding EAQECC is called a maximal-entanglement EAQECC. The rate and net rate of an EAQECC $[[n,k,d;c]]_q$ are defined as $k/n$ and $(k-c)/n$.

\begin{theorem}\label{thm:EAQECC}
There exists a $q$-ary EAQECC with parameters $[[q+1,q-3,4;4]]_q$. Moreover, this EAQECC is a maximal-entanglement EAQECC.
\end{theorem}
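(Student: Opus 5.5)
Apply Lemma~\ref{lem:EAQECC} with $D=\mathcal{C}^{\perp}$. By Theorem~\ref{lem10}, $\mathcal{C}^{\perp}$ is a $[q+1,q-3,4]_{q^2}$ code, so $n=q+1$, $k=q-3$, $d=4$. A parity-check matrix of $\mathcal{C}^{\perp}$ is a generator matrix of $\mathcal{C}$, namely the $4\times(q+1)$ matrix $H=A$ of (\ref{eq:C070201}), which has full rank $4$ because $\dim\mathcal{C}=4$. Lemma~\ref{lem:EAQECC} then yields parameters $[[q+1,\,2(q-3)-(q+1)+c,\,4;\,c]]_q=[[q+1,\,q-7+c,\,4;\,c]]_q$ with $c=\operatorname{rank}(AA^\dagger)$. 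The whole task therefore reduces to proving $\operatorname{rank}(AA^\dagger)=4$; this gives $[[q+1,q-3,4;4]]_q$, and since $c=4=n-k$ the code is maximal-entanglement.

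To compute $AA^\dagger$, index the rows of $A$ by exponents $e\in\{k_1,k_2,-k_1,-k_2\}$, where $k_1=\frac{p^s-1}{2}$ and $k_2=\frac{p^s+1}{2}$. Row $e$ is $(x^{e})_{x\in U_{q+1}}$. Since $x^q=x^{-1}$ on $U_{q+1}$, the $(e,f)$ entry of $AA^\dagger$ is
\[
\sum_{x\in U_{q+1}}x^{e}(x^{f})^q=\sum_{x\in U_{q+1}}x^{e-f}.
\]
This sum equals $q+1$ if $(q+1)\mid(e-f)$ and $0$ otherwise. The differences $e-f$ range over $0$, $\pm1$, $\pm(p^s-1)$, $\pm p^s$, $\pm(p^s+1)$.

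The main step is to check that no nonzero difference is divisible by $q+1$. We have $0<1,\ p^s-1,\ p^s,\ p^s+1\le p^{m-1}+1<q+1$ because $s<m$. Hence every off-diagonal entry vanishes and $AA^\dagger=(q+1)I_4$.

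Finally, $q+1\equiv1\pmod p$ is nonzero in $\mathbb{F}_{q^2}$, so $\operatorname{rank}(AA^\dagger)=4$ and the claim follows. I expect no real obstacle here; the only care needed is identifying $x^q=x^{-1}$ and bounding the exponent differences by $q+1$.
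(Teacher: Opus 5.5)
Your proposal is correct and follows essentially the same route as the paper. Both take $A$ as a parity-check matrix of $\mathcal{C}^{\perp}$, show the off-diagonal Hermitian inner products vanish because the exponent differences $\pm1,\pm(p^s-1),\pm p^s,\pm(p^s+1)$ are nonzero modulo $q+1$ (since $s<m$), conclude $AA^\dagger=(q+1)I_4=I_4$ has rank $4$, and apply Lemma~\ref{lem:EAQECC}. Your explicit bound $p^s+1\le p^{m-1}+1<q+1$ spells out the step the paper states in one line.
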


\begin{proof}
Since $\dim(\mathcal{C})=4$, $A$ is also a parity-check matrix of $\mathcal{C}^{\perp}$. Recall that $\boldsymbol{\alpha}_i$ denotes the $i$-th row of $A$, where $1\leq i\leq4$. Since $\beta^{q+1}=1$, we have $\beta^q=\beta^{-1}$.

For $1\leq i,j\leq4$, consider the Hermitian inner product $\langle\boldsymbol{\alpha}_i,\boldsymbol{\alpha}_j\rangle_H$. If $i=j$, then $\langle\boldsymbol{\alpha}_i,\boldsymbol{\alpha}_i\rangle_H=q+1=1$ in $\mathbb{F}_{q^2}$. If $i\neq j$, by the explicit form of the four rows of $A$, the exponent difference occurring in $\langle\boldsymbol{\alpha}_i,\boldsymbol{\alpha}_j\rangle_H$ is one of $\pm1$, $\pm(p^s-1)$, $\pm p^s$, and $\pm(p^s+1)$. Since $s<m$, all these integers are nonzero modulo $q+1$. Hence, by the geometric-series formula, $\langle\boldsymbol{\alpha}_i,\boldsymbol{\alpha}_j\rangle_H=0$. Therefore, $AA^\dagger=I_4$ and $\operatorname{rank}(AA^\dagger)=4$.

Applying Lemma \ref{lem:EAQECC} to {$\mathcal{C}^{\perp}=[q+1,q-3,4]_{q^2}$} gives a $q$-ary EAQECC with parameters
\[
[[q+1,2(q-3)-(q+1)+4,4;4]]_q=[[q+1,q-3,4;4]]_q.
\]
Since $4=(q+1)-(q-3)$, the resulting EAQECC is a maximal-entanglement EAQECC.
\end{proof}

\begin{remark}
The EAQECC in Theorem \ref{thm:EAQECC} has rate $(q-3)/(q+1)$ and net rate $(q-7)/(q+1)$. In particular, its net rate approaches $1$ as $q$ tends to infinity. Hence, the above construction gives a family of maximal-entanglement EAQECCs with minimum distance $4$ and asymptotically high net rate.
\end{remark}

In the following, we give some examples for Theorem \ref{thm:EAQECC} . These results are consistent with those obtained by Magma.

\begin{example}
Let $p=3$, $m=2$, and $s=1$. Then $q=9$. By Theorem \ref{thm:EAQECC}, there exists a maximal-entanglement EAQECC with parameters
$
{[[10,6,4;4]]_9}.
$

\end{example}

\begin{example}
Let $p=5$, $m=2$, and $s=1$. Then $q=25$. By Theorem \ref{thm:EAQECC},
there exists a maximal-entanglement EAQECC with parameters
$
{[[26,22,4;4]]_{25}}.
$
\end{example}

Locally recoverable codes (LRCs) were introduced to efficiently repair erased symbols in distributed storage systems. The $i$-th coordinate of a linear code is said to have locality $r$ if it can be recovered from at most $r$ other coordinates. A linear code has all-symbol locality $r$ if every coordinate has locality at most $r$ {\cite{GopalanHuangSimitciYekhanin}}.

\begin{lemma}\label{lem:SingletonLRC}
{\cite{GopalanHuangSimitciYekhanin}} Let $D$ be an $[n,k,d]$ linear code with all-symbol locality $r$. Then $d\leq n-k-\left\lceil\frac{k}{r}\right\rceil+2$.
\end{lemma}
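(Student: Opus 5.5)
This statement is the Singleton-type bound for locally repairable codes of Gopalan, Huang, Simitci and Yekhanin. The plan is a greedy subspace-growth argument against a generator matrix. Let $G$ be a $k\times n$ generator matrix of $D$ with columns $\mathbf{g}_1,\dots,\mathbf{g}_n\in\mathbb{F}^k$. All-symbol locality $r$ means the following: for every coordinate $i$ there is a repair set $R_i$ with $i\notin R_i$ and $|R_i|\le r$ such that $\mathbf{g}_i\in\operatorname{span}\{\mathbf{g}_j:j\in R_i\}$.

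We use the standard fact that, for a set $S$ of coordinates, $n-d\ge |S|$ whenever $\operatorname{rank}(G_S)<k$. The reason is that some nonzero message vector $\mathbf{u}$ annihilates all columns in $S$, so $\mathbf{u}G$ is a nonzero codeword vanishing on $S$, and its weight is at most $n-|S|$. It therefore suffices to construct a set $S$ with
\[
\operatorname{rank}(G_S)\le k-1 \quad\text{and}\quad |S|\ge k-1+\left\lceil\frac{k}{r}\right\rceil-1 .
\]
This gives $d\le n-|S|\le n-k-\lceil k/r\rceil+2$.

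\textbf{Greedy construction.} Start with $S_0=\varnothing$. At step $j$, suppose $\operatorname{rank}(G_{S_{j-1}})\le k-2$. Pick a coordinate $i$ with $\mathbf{g}_i\notin\operatorname{span}(G_{S_{j-1}})$; one exists because the full column set has rank $k$. Set $S_j=S_{j-1}\cup\{i\}\cup R_i$.

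Each step raises $|S|$ by at most $r+1$, and raises the rank by at most $r$, since $\mathbf{g}_i$ lies in the span of the columns in $R_i$. The rank strictly increases, and the size grows by at least one more than the rank increase: the column $i$ contributes to the size but adds no rank beyond what $R_i$ already gives. Hence $|S_j|-\operatorname{rank}(G_{S_j})\ge j$ throughout.

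The loop stops at the first step $t$ where the rank reaches $k-1$ or $k$.
\begin{itemize}
\item If the final rank is $k$, we back off by deleting elements of the last added block $\{i\}\cup R_i$ one at a time. Each deletion lowers the rank by at most one, so we can stop at rank exactly $k-1$. The size drops by the same amount as the rank, so the excess $|S|-\operatorname{rank}\ge t-1$ is preserved. (Care is needed here; taking $S$ as the last set of rank exactly $k-1$ in the deletion chain works.)
\item If the final rank is $k-1$, we may first pad $S$ with any coordinates whose columns already lie in its span. This keeps the rank at $k-1$ and does not decrease the excess.
\end{itemize}

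\textbf{Counting steps.} Since each step adds at most $r$ to the rank, reaching rank at least $k-1$ needs $t\ge\lceil (k-1)/r\rceil$ steps. The obstacle is to obtain excess $\lceil k/r\rceil-1$ exactly, handling the boundary case $r\mid k-1$ carefully.

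Refine the case analysis as follows:
\begin{itemize}
\item If the process hits rank exactly $k-1$ after $t$ steps, then $k-1\le tr$, and we need $t\ge\lceil k/r\rceil-1$. This holds because $\lceil k/r\rceil-1\le\lceil (k-1)/r\rceil$.
\item If it overshoots to rank $k$, then $t\ge\lceil k/r\rceil$, and after backing off we keep excess at least $t-1\ge\lceil k/r\rceil-1$.
\end{itemize}
In both cases $|S|\ge (k-1)+\lceil k/r\rceil-1$, which gives the bound. The delicate point to verify is the back-off step preserving excess; this is what I expect to require the most care.
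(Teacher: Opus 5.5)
The paper gives no proof of this lemma; it is quoted from \cite{GopalanHuangSimitciYekhanin}. Your greedy argument is essentially the original one: grow $S$ by whole repair groups $\{i\}\cup R_i$, gain at least one unit of excess $|S|-\operatorname{rank}(G_S)$ per step, and treat the last group separately. Your per-step bookkeeping is correct. Since $i\notin S_{j-1}$ and $i\notin R_i$, the size grows by $1+|R_i\setminus S_{j-1}|$ while the rank grows by at most $|R_i\setminus S_{j-1}|$. The counting in both terminal cases is also right: if rank $k-1$ is hit exactly, then $t\ge\lceil (k-1)/r\rceil\ge\lceil k/r\rceil-1$; if rank $k$ is reached, then $tr\ge k$.

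The one step that is wrong as written is your justification of the back-off. Deleting an element lowers $|S|$ by exactly one but lowers the rank by zero or one, so the excess can drop. For example, suppose $\operatorname{rank}(G_{S_{t-1}})=k-2$ and $R_i=\{a,b\}$ with $\mathbf{g}_i=\mathbf{g}_a+\mathbf{g}_b$, where $\mathbf{g}_a,\mathbf{g}_b$ are independent modulo $\operatorname{span}(G_{S_{t-1}})$. Then every set between $S_{t-1}$ and $S_t$ of rank $k-1$ is $S_{t-1}$ plus exactly one of $i,a,b$. Reaching it from $S_t$ removes two elements while the rank drops by one, so ``the size drops by the same amount as the rank'' is false.

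Your conclusion (excess at least $t-1$) is nevertheless true, for a different reason. Every set $F$ in the deletion chain contains $S_{t-1}$, and adding one element raises the size by one and the rank by at most one. Hence
\[
|F|-\operatorname{rank}(G_F)\ \ge\ |S_{t-1}|-\operatorname{rank}(G_{S_{t-1}})\ \ge\ t-1 .
\]
Along the chain the rank goes from $k$ down to at most $k-2$ in steps of $0$ or $1$, so some $F$ has rank exactly $k-1$, and any such $F$ works. Equivalently, as in the original proof, build up from $S_{t-1}$ by adding elements of $\{i\}\cup R_i$ one at a time and stop when the rank first reaches $k-1$; monotonicity of the excess is then immediate. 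With this replacement the proof is complete.
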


An LRC attaining the bound in Lemma \ref{lem:SingletonLRC} is called distance-optimal. In addition to the above Singleton-type bound, Cadambe and Mazumdar {\cite{CadambeMazumdar}} established an upper bound on the dimension of an LRC.

\begin{lemma}\label{lem:CMBound}
{\cite{CadambeMazumdar}} Let $D$ be an $[n,k,d]_{Q}$ linear code with all-symbol locality $r$. Then
\[
k\leq\min_{t\in\mathbb{Z}_{+}}\left\{tr+k_{\mathrm{opt}}^{(Q)}(n-t(r+1),d)\right\},
\]
where $k_{\mathrm{opt}}^{(Q)}(N,d)$ denotes the largest possible dimension of a $Q$-ary linear code of length $N$ and minimum distance $d$.
\end{lemma}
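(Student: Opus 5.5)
The plan is the standard rank-counting argument: find a coordinate set $S$ of size $t(r+1)$ that carries little rank, then shorten $D$ on $S$. Fix a generator matrix $G$ of $D$ with columns $\mathbf g_1,\ldots,\mathbf g_n\in\mathbb{F}_Q^k$. For $S\subseteq\{1,\ldots,n\}$, let $G_S$ be the submatrix on the columns in $S$. Fix $t\in\mathbb{Z}_+$ with $t(r+1)\le n$; for larger $t$ the right-hand side is not meaningful, or is trivially large. The goal is a set $S$ with $|S|=t(r+1)$ and $\operatorname{rank}(G_S)\le tr$.

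\textbf{Constructing $S$.} All-symbol locality $r$ means each coordinate $i$ has a repair set $R_i\not\ni i$ with $|R_i|\le r$ and $\mathbf g_i\in\operatorname{span}\{\mathbf g_j:j\in R_i\}$. Start with $S_0=\emptyset$ and run $t$ rounds. In round $j$:
\begin{itemize}
\item If $\operatorname{rank}(G_{S_{j-1}})<k$, choose $i\notin S_{j-1}$ with $\mathbf g_i\notin\operatorname{span}(G_{S_{j-1}})$. This exists because the columns span $\mathbb{F}_Q^k$. Let $N$ be the set of new elements of $\{i\}\cup R_i$.
\item If the rank is already $k$, take $N=\emptyset$.
\item In either case, pad $N$ with arbitrary unused coordinates until exactly $r+1$ new coordinates have been added, and set $S_j=S_{j-1}\cup N\cup(\text{padding})$.
\end{itemize}

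\textbf{Rank increase per round.} Since $\mathbf g_i$ lies in the span of columns indexed by $R_i\subseteq S_j$, the elements of $N$ raise the rank by at most $|N|-1$. Each padding coordinate raises it by at most $1$. So in the first case the increase is at most $(|N|-1)+(r+1-|N|)=r$. In the second case the rank stays at $k$ and cannot grow; since it was $\le (j-1)r$ when it first reached $k$, we still get $\operatorname{rank}(G_{S_j})\le jr$. The condition $t(r+1)\le n$ guarantees enough unused coordinates for padding. Setting $S=S_t$, we get $|S|=t(r+1)$ and $\operatorname{rank}(G_S)\le tr$.

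\textbf{Shortening.} Let $D_S=\{\mathbf c\in D:\mathbf c_S=\mathbf 0\}$. This is the kernel of the linear map $D\to\mathbb{F}_Q^{S}$, $\mathbf c\mapsto\mathbf c_S$, whose image has dimension $\operatorname{rank}(G_S)$. Hence $\dim D_S\ge k-tr$. Now restrict $D_S$ to the complementary $n-t(r+1)$ coordinates. This restriction is injective, because a nonzero codeword vanishing on $S$ must be nonzero off $S$. It also preserves weights, so the image is a linear code of length $n-t(r+1)$ and minimum distance at least $d$. If the distance exceeds $d$, I will use that $k_{\mathrm{opt}}^{(Q)}(N,\cdot)$ is nonincreasing in the distance, so the image dimension is still at most $k_{\mathrm{opt}}^{(Q)}(n-t(r+1),d)$. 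When $n-t(r+1)<d$, every codeword of $D_S$ has weight $<d$, so it is zero, matching the convention $k_{\mathrm{opt}}=0$.

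Combining the two bounds gives $k-tr\le k_{\mathrm{opt}}^{(Q)}(n-t(r+1),d)$. Minimizing over $t$ yields the stated inequality.

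The step I expect to need the most care is the greedy construction of $S$. Repair groups can overlap previously chosen coordinates, and the process must continue cleanly after the rank reaches $k$. The bookkeeping "new coordinates $N$ raise the rank by at most $|N|-1$, each padding coordinate by at most $1$" is exactly what handles both issues, so I would verify that accounting first.
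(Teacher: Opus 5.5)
The paper gives no proof of this lemma; it is quoted from Cadambe--Mazumdar \cite{CadambeMazumdar}. Your argument is correct. It is the linear-algebraic form of their proof rather than a genuinely new route.

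Their argument is information-theoretic and covers arbitrary, possibly nonlinear, codes. They build a set $I$ with $|I|=t(r+1)$ and $H(X_I)\le tr$, where $H$ is the base-$Q$ entropy of a uniformly random codeword, using essentially your greedy procedure. They then choose a value $x_I$ whose fiber $\{\mathbf c:\mathbf c_I=x_I\}$ contains at least $Q^{k-tr}$ codewords. Puncturing that fiber on $I$ gives a code of length $n-t(r+1)$ and distance at least $d$.

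You replace entropy by $\operatorname{rank}(G_S)$ and the heavy fiber by the kernel $D_S$. Nothing is lost, since for linear $D$ every fiber is a coset of $D_S$. This is more elementary, and it has a real advantage for the statement as formulated in the paper. The code you end up with is linear, which is what is needed because the paper's $k_{\mathrm{opt}}^{(Q)}$ ranges over linear codes. A version with $k_{\mathrm{opt}}$ taken over all codes would have a possibly larger right-hand side. The cost is that your proof applies only to linear $D$, which is all the paper uses, with $t=1$.

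Two small points should be tightened.

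\textbf{Large $t$.} For $t(r+1)>n$, calling the right-hand side ``trivially large'' is not immediate if you adopt $k_{\mathrm{opt}}=0$ for nonpositive length, because you then need $k\le tr$. You can either restrict $t$ to $t(r+1)\le n$, where $k_{\mathrm{opt}}$ is defined, or prove the needed bound directly. For the latter, run your greedy for $t_0=\lfloor n/(r+1)\rfloor$ rounds and then adjoin the remaining $n-t_0(r+1)\le r$ columns. This gives $k\le t_0r+r\le tr$ for every $t\ge t_0+1$.

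\textbf{Monotonicity of $k_{\mathrm{opt}}$.} The claim that $k_{\mathrm{opt}}^{(Q)}(N,\cdot)$ is nonincreasing in the distance deserves one line. Puncture a linear code of distance $d'>d$ at a coordinate in the support of a minimum-weight codeword, then append a zero coordinate. This keeps length and dimension and lowers the distance by exactly one; iterate down to $d$.

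Finally, the translation of locality into $\mathbf g_i\in\operatorname{span}\{\mathbf g_j:j\in R_i\}$ is standard for linear codes. It holds because $c_i$ being a function of $c_{R_i}$ on $D$ is a kernel inclusion between two linear maps.
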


An LRC whose dimension attains the bound in Lemma \ref{lem:CMBound} is called dimension-optimal.

\begin{lemma}\label{lem:locality}\cite{Tan2023}
Let $D$ be a linear code with dual minimum distance $d^\perp$. If every coordinate is contained in the support of some minimum-weight codeword of $D^\perp$, then the minimum all-symbol locality of $D$ is $d^\perp-1$.
\end{lemma}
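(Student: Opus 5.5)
The lemma to be proved says: if $d^\perp$ is the minimum distance of $D^\perp$ and every coordinate lies in the support of some minimum-weight codeword of $D^\perp$, then the minimum all-symbol locality of $D$ is exactly $d^\perp-1$. I would prove two inequalities: first that every coordinate has locality at most $d^\perp-1$, and then that no coordinate has locality smaller than $d^\perp-1$. Both rest on the standard correspondence between repair groups and dual codewords: coordinate $i$ of $D$ is recoverable from a set $R\subseteq X\setminus\{i\}$ if and only if some $\mathbf{u}\in D^\perp$ has $u_i\neq 0$ and $\operatorname{supp}(\mathbf{u})\subseteq R\cup\{i\}$.

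\emph{Upper bound.} Fix a coordinate $i$. By hypothesis there is a codeword $\mathbf{u}\in D^\perp$ with $\operatorname{wt}(\mathbf{u})=d^\perp$ and $i\in\operatorname{supp}(\mathbf{u})$. For every $\mathbf{c}\in D$ we have $\sum_j u_jc_j=0$, so
\[
c_i=-u_i^{-1}\sum_{j\in\operatorname{supp}(\mathbf{u})\setminus\{i\}}u_jc_j .
\]
Thus $c_i$ is determined by the $d^\perp-1$ coordinates in $\operatorname{supp}(\mathbf{u})\setminus\{i\}$, so coordinate $i$ has locality at most $d^\perp-1$. Since $i$ was arbitrary, $D$ has all-symbol locality at most $d^\perp-1$.

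\emph{Lower bound.} Suppose some coordinate $i$ can be recovered from a set $R\subseteq X\setminus\{i\}$ with $|R|=r$. Then the map $\mathbf{c}|_R\mapsto c_i$ is a well-defined function on the projection of $D$ to $R$. It is linear: if $\mathbf{c},\mathbf{c}'\in D$ and $a,b$ are scalars, then $a\mathbf{c}+b\mathbf{c}'\in D$, and its value at $i$ is $ac_i+bc'_i$, which is therefore the value of the function at $a\mathbf{c}|_R+b\mathbf{c}'|_R$. Hence there are scalars $\lambda_j$ with $c_i=\sum_{j\in R}\lambda_jc_j$ for all $\mathbf{c}\in D$. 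Define $\mathbf{u}$ by $u_i=1$, $u_j=-\lambda_j$ for $j\in R$, and $u_j=0$ otherwise. Then $\mathbf{u}\in D^\perp$ and $\mathbf{u}\neq 0$, so $d^\perp\le\operatorname{wt}(\mathbf{u})\le r+1$, that is, $r\ge d^\perp-1$.

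Combining the two bounds, the minimum all-symbol locality of $D$ is exactly $d^\perp-1$. The only delicate step is the linearity of the recovery map in the lower bound, which is what lets a repair group be converted into a nonzero dual codeword; everything else is direct.
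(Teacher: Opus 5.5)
The paper does not prove this lemma; it quotes it from \cite{Tan2023} and uses it only in Theorem~\ref{thm:LRC}. Your proposal is correct and is the standard argument behind that citation.

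\begin{itemize}
\item \emph{Upper bound.} The covering hypothesis gives, for each coordinate $i$, a weight-$d^\perp$ dual codeword $\mathbf{u}$ with $u_i\neq 0$. This yields a repair group $\operatorname{supp}(\mathbf{u})\setminus\{i\}$ of size $d^\perp-1$.
\item \emph{Lower bound.} Any repair group $R$ for $i$ gives, through linearity of the recovery map, the nonzero dual codeword $\mathbf{e}_i-\sum_{j\in R}\lambda_j\mathbf{e}_j$. Its weight is at most $|R|+1$.
\end{itemize}

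You compress one step: getting the scalars $\lambda_j$ from a linear functional defined only on the projection $D|_R$. State that this functional on the subspace $D|_R\subseteq\mathbb{F}^R$ extends to all of $\mathbb{F}^R$. Every linear functional on $\mathbb{F}^R$ has the form $\mathbf{x}\mapsto\sum_{j\in R}\lambda_j x_j$. This is routine, not a gap.

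Compared with the bare citation, your version buys three things. It is self-contained. It works directly with the paper's informal definition of locality, where recovery may be by an arbitrary function; you show any such function is automatically linear on $D|_R$. It also shows that the covering hypothesis is needed only for the upper bound: every repair group of every coordinate of any linear code has size at least $d^\perp-1$.
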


We now show that $\mathcal{C}$ gives a family of optimal LRCs for a particular choice of the parameters.

\begin{theorem}\label{thm:LRC}
The code $\mathcal{C}$ has all-symbol locality $r=3$. {Moreover, $\mathcal{C}$ is distance-optimal if and only if $p=3$ and $\gcd(m,s)=1$. In this case, $\mathcal{C}$ is also dimension-optimal.}
\end{theorem}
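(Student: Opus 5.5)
Locality first. By Theorem \ref{lem10}, $d(\mathcal{C}^\perp)=4$. By Theorem \ref{theorem1}, the weight-$4$ codewords of $\mathcal{C}^\perp$ support a $3$-$(q+1,4,p^\ell-2)$ design with $p^\ell-2\geq1$, so every coordinate lies in some block. (Any $3$-subset containing a given point lies in a block.) Lemma \ref{lem:locality} then gives minimum all-symbol locality $d^\perp-1=3$, so $r=3$.

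Distance-optimality. With $n=q+1$, $k=4$, $r=3$, we have $\lceil k/r\rceil=2$. The bound in Lemma \ref{lem:SingletonLRC} is therefore $d\leq q+1-4-2+2=q-3$. By Theorem \ref{lem10}, $d=q-p^\ell$, so equality holds if and only if $p^\ell=3$. Since $p$ is odd and $\ell\geq1$, this means $p=3$ and $\ell=\gcd(m,s)=1$. This gives both directions at once.

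Dimension-optimality when $p=3$, $\ell=1$. Here $d=q-3$, and I will apply Lemma \ref{lem:CMBound} with $Q=q^2$. Since $\mathcal{C}$ has locality $3$, the bound gives $4\leq\min_t\{3t+k_{\mathrm{opt}}(q+1-4t,q-3)\}$, and it suffices to show the minimum equals $4$. Take $t=1$. The right side is $3+k_{\mathrm{opt}}^{(q^2)}(q-3,q-3)$. A code of length $N$ and minimum distance $N$ has dimension at most $1$ by the Singleton bound, and the repetition code attains this, so $k_{\mathrm{opt}}=1$ and the term equals $4$. For $t\geq2$ the length is $q+1-4t<q-3$, so no nonzero code reaches distance $q-3$; with the usual convention these terms are $3t\geq6$ (or $3t+0$), hence at least $4$. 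So the minimum is exactly $4=k$, and $\mathcal{C}$ is dimension-optimal.

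The only delicate point is the convention for $k_{\mathrm{opt}}$ when the length is less than $d$ or nonpositive. Either convention gives a value at least $4$ for $t\geq2$, so the $t=1$ term determines the minimum. Everything else follows directly from the earlier results.
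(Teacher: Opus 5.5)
Your proof is correct and follows essentially the same route as the paper. Locality comes from the $3$-$(q+1,4,p^\ell-2)$ design on the weight-$4$ codewords of $\mathcal{C}^{\perp}$ together with Lemma~\ref{lem:locality}; distance-optimality comes from comparing the Singleton-type bound $d\leq q-3$ with $d=q-p^\ell$; and dimension-optimality comes from the $t=1$ term of the Cadambe--Mazumdar bound. Your separate treatment of the $t\geq 2$ terms is harmless but unnecessary: as you note yourself, applying the bound to $\mathcal{C}$ already gives $\min_t\{\cdots\}\geq 4$, so it suffices that the $t=1$ term is at most $4$, which is exactly how the paper concludes.
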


\begin{proof}
By Theorem \ref{theorem1}, the supports of the minimum-weight codewords of $\mathcal{C}^{\perp}$ form a $3$-$(q+1,4,p^\ell-2)$ design. Hence every coordinate is contained in the support of some weight-$4$ codeword of $\mathcal{C}^{\perp}$. Since $d(\mathcal{C}^{\perp})=4$, Lemma \ref{lem:locality} gives $r=3$.

By Lemma \ref{lem:SingletonLRC}, $d\leq n-k-\left\lceil\frac{k}{r}\right\rceil+2=(q+1)-4-\left\lceil\frac{4}{3}\right\rceil+2=q-3$. Since $d(\mathcal{C})=q-p^\ell$, $\mathcal{C}$ attains the Singleton-type bound if and only if $p^\ell=3$. Since $p$ is an odd prime and $\ell=\gcd(m,s)\geq1$, this is equivalent to $p=3$ and $\ell=1$, i.e., $p=3$ and $\gcd(m,s)=1$.

When $p^\ell=3$, we have $d=q-3$. Taking $t=1$ in Lemma \ref{lem:CMBound}, we obtain
\[
k\leq3+k_{\mathrm{opt}}^{(q^2)}(q-3,q-3).
\]
By the classical Singleton bound, $k_{\mathrm{opt}}^{(q^2)}(q-3,q-3)\leq1$, and hence $k\leq4$. Since $\dim(\mathcal{C})=4$, equality holds. Therefore, $\mathcal{C}$ is also dimension-optimal.
\end{proof}

\begin{remark}
When $p=3$ and $\gcd(m,s)=1$, the code $\mathcal{C}$ is both distance-optimal and dimension-optimal. Hence, in this case, $\mathcal{C}$ simultaneously attains the Singleton-type bound and the Cadambe--Mazumdar bound for locally repairable codes.
\end{remark}

In the following, we give some examples for Theorem \ref{thm:LRC}. These results are consistent with those obtained by Magma.

\begin{example}
Let $p=3$, $m=2$, and $s=1$. Then $q=9$ and $\ell=1$. Hence $\mathcal{C}$ is an optimal
$
[10,4,6]
$
LRC with locality $3$.
\end{example}
\begin{example}
Let $p=3$, $m=3$, and $s=1$. Then $q=27$ and
$\ell=1$. Hence $\mathcal{C}$ is an optimal
$
[28,4,24]
$
LRC with locality $3$.
\end{example}

\section{3-design from the nonlinear code \(\mathcal{C}_2\)}

In this section,  we establish the relationship between the supports of codewords in $\mathcal{C}$ and $\mathcal{C}_2$, and determine the parameters of $\mathcal{C}_2$, where $\mathcal{C}_2$ is given in (\ref{eq:C2}). Moreover, we show that the codewords of each nonzero Hamming weight in $\mathcal{C}_2$ support the corresponding $3$-designs.

\begin{theorem}\label{thm:C2}
Let \(q=p^m\), \(1\leq s<m\) and
$\ell=\gcd(m,s),$ where \(p\) is an odd prime.
Let \(\mathcal{C}_2\) be the code over \(\mathbb{F}_{q^2}\) defined in \((\ref{eq:C2})\). Then \(\mathcal{C}_2\) is a nonlinear code over \(\mathbb{F}_{q^2}\) with length \(q+1\) and cardinality \(q^4\). In addition, the Hamming weight of every nonzero codeword of \(\mathcal{C}_2\) belongs to
$$
\left\{q-p^\ell,\;q-1,\;q,\;q+1\right\}.
$$
\end{theorem}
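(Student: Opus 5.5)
The plan has three parts: containment in $\mathcal{C}$, injectivity of the encoding, and failure of $\mathbb{F}_{q^2}$-linearity.

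\textbf{Weights.} By definition $\mathcal{C}_2\subseteq\mathcal{C}$, since every codeword of $\mathcal{C}_2$ has the form $(a,b,c,d)A$ with $(c,d)=(-a^q,-b^q)$. Theorem~\ref{lem10} (via Lemma~\ref{conj-21march338}) gives the nonzero weights of $\mathcal{C}$ as $\{q-p^\ell,q-1,q,q+1\}$. Hence every nonzero codeword of $\mathcal{C}_2$ has weight in this set. The length is $q+1$ because $A$ has $q+1$ columns.

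\textbf{Cardinality.} Consider the map $\phi:\mathbb{F}_{q^2}^2\to\mathcal{C}_2$, $(a,b)\mapsto(a,b,-a^q,-b^q)A$. It is additive, and it is $\mathbb{F}_q$-linear, since $\lambda^q=\lambda$ for $\lambda\in\mathbb{F}_q$. The proof of Theorem~\ref{lem10} shows that the four rows of $A$ are linearly independent over $\mathbb{F}_{q^2}$. So $\phi(a,b)=\mathbf{0}$ forces $(a,b,-a^q,-b^q)=\mathbf{0}$, i.e.\ $a=b=0$. Thus $\phi$ is injective and $|\mathcal{C}_2|=|\mathbb{F}_{q^2}|^2=q^4$.

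\textbf{Nonlinearity.} I will show $\mathcal{C}_2$ is not closed under scalar multiplication by $\mathbb{F}_{q^2}$. Choose $\mu\in\mathbb{F}_{q^2}\setminus\mathbb{F}_q$ and a nonzero codeword $\mathbf{c}=\phi(a,b)$, say with $a\neq 0$. Suppose $\mu\mathbf{c}\in\mathcal{C}_2$, so $\mu\mathbf{c}=\phi(a',b')$. Row-independence of $A$ then gives
\[
(\mu a,\mu b,-\mu a^q,-\mu b^q)=(a',b',-a'^q,-b'^q).
\]
From the first and third coordinates, $\mu a^q=(\mu a)^q=\mu^q a^q$. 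Since $a\neq 0$, this forces $\mu^q=\mu$, i.e.\ $\mu\in\mathbb{F}_q$, a contradiction. Hence $\mathcal{C}_2$ is not an $\mathbb{F}_{q^2}$-linear code, although it is additive.

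The only subtle point is the independence of the rows of $A$, which both the cardinality and nonlinearity arguments rely on. It is already established in Theorem~\ref{lem10}, so I expect no real obstacle. The weight set is simply inherited from $\mathcal{C}$; determining which weights actually occur, and how often, would be the harder task, but it is not required by this statement.
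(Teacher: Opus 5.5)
Your proof is correct, but it reaches the cardinality by a different route from the paper. Your nonlinearity argument is the paper's: it uses the codeword $(1,0,-1,0)A$, which is your $\phi(1,0)$. Inheriting the weight set through $\mathcal{C}_2\subseteq\mathcal{C}$ amounts to the paper's step of applying Lemma~\ref{conj-21march338} directly to $ay^{p^s}+by^{p^s+1}-a^qy-b^q$, since that is the specialization $(c,d)=(-a^q,-b^q)$.

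The genuine difference is the cardinality. The paper argues as follows:
\begin{enumerate}
\item It shows that $(a,b,c,d)\mapsto(a+c,b+d,c^q-a^q,d^q-b^q)$ is a bijection, which gives $\mathcal{C}=\mathcal{C}_1+\mathcal{C}_2$.
\item It checks that $\mathcal{C}_1\cap\mathcal{C}_2=\{\mathbf{0}\}$.
\item It divides $|\mathcal{C}|=q^8$ by $|\mathcal{C}_1|=q^4$, where the value $|\mathcal{C}_1|=q^4$ is imported from \cite{WangTangDing}.
\end{enumerate}

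Your injectivity argument for $(a,b)\mapsto(a,b,-a^q,-b^q)A$ is shorter and self-contained. It needs only the row independence of $A$, which is already established in the proof of Theorem~\ref{lem10}, and no external dimension count. The paper's longer route does yield, as a by-product, the decomposition $\mathcal{C}=\mathcal{C}_1\oplus\mathcal{C}_2$ that it highlights in the introduction and conclusion. That decomposition is also within your reach: the same injectivity argument applied to $(c,d)\mapsto(c,d,c^q,d^q)A$ gives $|\mathcal{C}_1|=q^4$ without the citation.
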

\begin{proof}
Recall from \((\ref{eq:C2})\) that
$$
\mathcal{C}_2
=
\left\{
(a,b,-a^q,-b^q)A:
a,b\in\mathbb{F}_{q^2}
\right\}.
$$
Let
$
\boldsymbol{c}_2=(1,0,-1,0)A\in\mathcal{C}_2
$
and choose \(\mu\in\mathbb{F}_{q^2}\setminus\mathbb{F}_q\).
Suppose that {\(\mu\boldsymbol{c}_2\in\mathcal{C}_2\)}. Then there exist
\(a,b\in\mathbb{F}_{q^2}\) such that
\begin{equation}\label{eq:0906}
(\mu,0,-\mu,0)A=(a,b,-a^q,-b^q)A.
\end{equation}
From Theorem \ref{lem10}, we {know} that the rows of \(A\) are linearly independent. Then, if (\ref{eq:0906}) holds,  we must have
$$
a=\mu,\qquad b=0,\qquad a^q=\mu.
$$
Thus \(\mu^q=\mu\), which implies \(\mu\in\mathbb{F}_q\), a contradiction.
Hence, \(\mathcal{C}_2\) is not closed under scalar multiplication by
\(\mathbb{F}_{q^2}\). Therefore, \(\mathcal{C}_2\) is nonlinear over \(\mathbb{F}_{q^2}\).

Now consider a nonzero codeword
$$
\boldsymbol{c}_2=(a,b,-a^q,-b^q)A\in\mathcal{C}_2,
$$
where \((a,b)\neq(0,0)\). By the definition of \(A\),
\begin{equation}\label{eq:wts}
\operatorname{wt}(\boldsymbol{c}_2)
=q+1-\Bigl|
\bigl\{y\in U_{q+1}:
ay^{\frac{p^s-1}{2}}
+by^{\frac{p^s+1}{2}}-a^qy^{-\frac{p^s-1}{2}}
-b^qy^{-\frac{p^s+1}{2}}
=0
\bigr\}
\Bigr|.
\end{equation}
Multiplying the equation inside the braces by
\(y^{(p^s+1)/2}\), which is nonzero for \(y\in U_{q+1}\), gives
$$
ay^{p^s}+by^{p^s+1}-a^qy-b^q=0.
$$
From Lemma \ref{conj-21march338}, the number of solutions of this equation in
\(U_{q+1}\) is one of $0$, $1$, $2$ or $p^\ell+1$.
From (\ref{eq:wts}), every nonzero codeword of \(\mathcal{C}_2\) has Hamming
weight belonging to
$$
\left\{
q+1,\;q,\;q-1,\;q-p^\ell
\right\}.
$$

We next show that the cardinality of $\mathcal{C}_2$ is $q^4$. Let $\mathbf{c}_1 \in \mathcal{C}_1$ and $\mathbf{c}_2 \in \mathcal{C}_2$. By the definition of $\mathcal{C}_1$ and $\mathcal{C}_2$, we have that
$\mathbf{c}_1$ and $\mathbf{c}_2$ can be expressed as
$$\mathbf{c}_1 = (c,d,c^q,d^q)A\,\, \text{and}\,\,\mathbf{c}_2 = (a,b,-a^q,-b^q)A,$$
where $a,b,c,d \in \mathbb{F}_{q^2}$.
{Since $p$ is odd, the map
\[
(a,b,c,d)\mapsto(a+c,b+d,c^q-a^q,d^q-b^q)
\]
is bijective; indeed, if its image is $(u,v,w,z)$, then
$a=(u-w^q)/2$, $c=(u+w^q)/2$, $b=(v-z^q)/2$, and $d=(v+z^q)/2$.}
Thus
$$\{(a+c,b+d,c^q-a^q,d^q-b^q)A\,|\, a,b,c,d \in \mathbb{F}_{q^2}\}=\{(a,b,c,d)A\,|\, a,b,c,d \in \mathbb{F}_{q^2}\}.$$
Then, by the definition of $\mathcal{C}$, for any $\mathbf{c} \in \mathcal{C}$, there exist $\mathbf{c}_1$ and $\mathbf{c}_2$ such that
$\mathbf{c}=\mathbf{c}_1+\mathbf{c}_2$,
which implies that
\begin{equation}\label{eq:dfC}
\mathcal{C}=
\left\{
\mathbf{c}_1+\mathbf{c}_2:
\mathbf{c}_1\in \mathcal{C}_1,\,
\mathbf{c}_2\in \mathcal{C}_2
\right\}.
\end{equation}
 Moreover, $\mathcal C_1\cap\mathcal C_2=\{\mathbf0\}$. Indeed, if $(c,d,c^q,d^q)A=(a,b,-a^q,-b^q)A$, then the linear independence of the rows of $A$ gives $c=a$, $d=b$, $c^q=-a^q$ and $d^q=-b^q$. Since $p$ is odd, we obtain $a=b=c=d=0$. {Since both $\mathcal C_1$ and $\mathcal C_2$ are additive subgroups of $\mathcal C$, we have} $\mathcal C=\mathcal C_1\oplus\mathcal C_2$, and hence,
\begin{equation}\label{eq:dfC1}
|\mathcal{C}|= |\mathcal{C}_1||\mathcal{C}_2|.
\end{equation}
From Theorem \ref{lem10}, we know that  $\mathcal{C}$ is a linear code with dimension 4 over \(\mathbb{F}_{q^2}\). In \cite{WangTangDing}, the authors proved that $\mathcal{C}_1$ is a cyclic code with dimension 4 over \(\mathbb{F}_{q}\). Hence,  (\ref{eq:dfC1}) implies that the number of codewords in $\mathcal{C}_2$ over  \(\mathbb{F}_{q^2}\) is \((q^2)^2 = q^4\).
This completes the proof.
\end{proof}

\begin{theorem}\label{lem22} Let \(\mathcal{B}_i\) be the family of support sets of all codewords with weight \(i\) in code \(\mathcal{C}_2\). If \(|\mathcal{B}_i|\neq 0\), then \((U_{q+1},\mathcal{B}_i)\) admits a \(3\)-design. Moreover, the codewords with weight $q-p^\ell$ in $\mathcal{C}_2$ support a $3$-$(q+1,q-p^\ell,\lambda')$ design with
$$\lambda'=\frac{
(q-p^\ell)(q-p^\ell-1)(q-p^\ell-2)
}{
p^{3\ell}-p^\ell
}.$$
\end{theorem}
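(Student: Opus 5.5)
The plan is to reduce $\mathcal{C}_2$ to $\mathcal{C}_1$ by a scalar multiple, so that the $3$-design property comes from $\mathcal{C}_1$, and then to identify the minimum-weight blocks of $\mathcal{C}_2$ with those of $\mathcal{C}$, so that $\lambda'$ comes from Theorem \ref{theorem1}.

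\textbf{Step 1: $\mathcal{C}_2=\theta\,\mathcal{C}_1$.} Fix $\theta\in\mathbb{F}_{q^2}^*$ with $\theta^{q-1}=-1$; such $\theta$ exists because $-1\in U_{q+1}$ and $x\mapsto x^{q-1}$ maps $\mathbb{F}_{q^2}^*$ onto $U_{q+1}$. For a codeword $(c,d,c^q,d^q)A\in\mathcal{C}_1$ we get
\[
\theta(c,d,c^q,d^q)A=(\theta c,\theta d,\theta c^q,\theta d^q)A .
\]
Since $(\theta c)^q=\theta^q c^q=-\theta c^q$, this vector equals $(a,b,-a^q,-b^q)A$ with $a=\theta c$ and $b=\theta d$. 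The map $(c,d)\mapsto(\theta c,\theta d)$ is a bijection of $\mathbb{F}_{q^2}^2$, so $\mathcal{C}_2=\theta\,\mathcal{C}_1$. Multiplication by the nonzero scalar $\theta$ preserves supports, hence $\mathcal{B}_i(\mathcal{C}_2)=\mathcal{B}_i(\mathcal{C}_1)$ for every $i$.

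\textbf{Step 2: the $3$-design property.} By Remark \ref{rem:additiveC2}, the codewords of every nonzero weight of $\mathcal{C}_1$ support a $3$-design. If one wants this spelled out, repeat the three cases of Proposition \ref{lem11}. For $\mathbf c\in\mathcal{C}_1$ and $\pi\in\operatorname{Stab}_{U_{q+1}}$, the identity $f(\pi x)=\rho(x)\bar f(x)$ holds with $\bar f$ coming from $\mathcal{C}$. One then has to adjust $\bar f$ by a nonzero scalar so that it lies in $\mathcal{C}_1$. The cleanest way to do this is to use the characterization, for $x\in U_{q+1}$ where $x^{-1}=x^q$,
\[
\mathcal{C}_1=\{\mathbf c\in\mathcal{C}:\ c_x\in\mathbb{F}_q \text{ for all } x\},
\]
together with the observation that $\rho(x)$ times a suitable constant is $\mathbb{F}_q$-valued on $U_{q+1}$. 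I expect this to be the main technical point if the Remark is not taken for granted. With it, Lemmas \ref{lem1} and \ref{lem2} give that $(U_{q+1},\mathcal{B}_i)$ is a $3$-design whenever $\mathcal{B}_i\neq\varnothing$.

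\textbf{Step 3: the minimum-weight blocks and $\lambda'$.} Since $\mathcal{C}_2\subseteq\mathcal{C}$, we have $\mathcal{B}_{q-p^\ell}(\mathcal{C}_2)\subseteq\mathcal{B}_{q-p^\ell}(\mathcal{C})$. For the reverse inclusion, start from Theorem \ref{theorem1}, which gives
\[
b\,(q^2-1)=A_{q-p^\ell}.
\]
So each minimum-weight support $S$ of $\mathcal{C}$ is carried by exactly one projective codeword: the codewords with support $S$ together with $\mathbf 0$ form a line $\langle\mathbf c\rangle$.

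Next consider the semilinear involution
\[
\sigma:(a,b,c,d)A\mapsto(-c^q,-d^q,-a^q,-b^q)A .
\]
Coordinatewise it is $f\mapsto -f^q$ on $U_{q+1}$, so it preserves supports, and its fixed set is exactly $\mathcal{C}_2$. Hence $\sigma(\mathbf c)=\gamma\mathbf c$ for some scalar $\gamma$. Applying $\sigma$ twice and using $\sigma^2=\mathrm{id}$ gives $\gamma^{q+1}=1$. Choose $\lambda$ with $\lambda^{q-1}=\gamma^{-1}$; then $\sigma(\lambda\mathbf c)=\lambda^q\gamma\mathbf c=\lambda\mathbf c$, so $\lambda\mathbf c\in\mathcal{C}_2$ has support $S$. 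Therefore the two block sets coincide, the number of blocks is the same $b$ as in Theorem \ref{theorem1}, and (\ref{eq:design}) gives the same $\lambda'$ as there.
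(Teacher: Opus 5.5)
Your proof is correct, but it is organized differently from the paper's. The paper never goes through $\mathcal{C}_1$. For $\mathbf c\in\mathcal{C}$ of weight $i\le q$, it picks $\mu\in\mathbb{F}_{q^2}^*$ outside $\bigcup_{j\in\operatorname{supp}(\mathbf c)}c_j^{-1}\mathbb{F}_q^*$, which is possible since $i(q-1)<q^2-1$. It then notes that $\mu\mathbf c-(\mu\mathbf c)^q$ lies in $\mathcal{C}_2$ and has the same support as $\mathbf c$. This gives $\mathcal{B}_i(\mathcal{C}_2)=\mathcal{B}_i(\mathcal{C})$ for all $i\le q$ at once. The design property then comes from Proposition \ref{lem11}, $\lambda'$ comes from Theorem \ref{theorem1}, and $i=q+1$ is trivial. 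No appeal to the remark on $\mathcal{C}_1$ or to any involution argument is needed.

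Your Step 1 identity $\mathcal{C}_2=\theta\,\mathcal{C}_1$ is right, and it says more than anything in the paper: the ``nonlinear'' code is just a scalar multiple of the $\mathbb{F}_q$-linear code $\mathcal{C}_1$, with the same supports and weight distribution. The cost is that Step 2 rests on Remark \ref{rem:additiveC2}, which the paper only asserts.

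The point you flagged does close, and no constant is needed. On $U_{q+1}$ one has $a^qx+b^q=x(bx+a)^q$, so $\rho(x)=\bigl((bx+a)^{q+1}\bigr)^{-(p^s+1)/2}\in\mathbb{F}_q^*$. Hence $\bar{\mathbf c}$ is already $\mathbb{F}_q$-valued and lies in $\mathcal{C}_1$ by your characterization. You then need the general fact that a block set invariant under a $3$-transitive group is a $3$-design, since Lemma \ref{lem2} is phrased for linear codes over $\mathbb{F}_{q^2}$.

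Your Step 3 Hilbert~90 argument with $\sigma(\mathbf c)=-\mathbf c^{q}$ is valid, but it only works at the minimum weight, because it needs each support to carry a single projective codeword. That fact is better proved directly than read off from the block count in Theorem \ref{theorem1}, whose proof simply asserts $b=A_{q-p^\ell}/(q^2-1)$. Directly: if $\mathbf c,\mathbf c'$ share a minimum-weight support containing $j$, then $\mathbf c'-(c'_j/c_j)\mathbf c$ has smaller weight and so vanishes.

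The paper's $\mu$-trick is more elementary and covers every weight $\le q$, which makes your Steps 2 and 3 unnecessary. Conversely, your Step 1 combined with that trick gives $\mathcal{B}_i(\mathcal{C}_1)=\mathcal{B}_i(\mathcal{C})$ for all $i\le q$. That is a proof of Remark \ref{rem:additiveC2} for free.
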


\begin{proof}
By Theorem \ref{thm:C2}, the possible nonzero weights of
\(\mathcal{C}_2\) are contained in
\[
\left\{
q-p^\ell,\ q-1,\ q,\ q+1
\right\}.
\]
Since \(\mathcal{C}_2\subseteq\mathcal{C}\), we immediately have
\begin{equation}\label{eq:ddd97}
\mathcal{B}_i(\mathcal{C}_2)
\subseteq
\mathcal{B}_i(\mathcal{C})
\end{equation}
for every \(i\).
We first consider
$
i\in\{q-p^\ell,q-1,q\}$ and claim that
$
\mathcal{B}_i(\mathcal{C})
\subseteq
\mathcal{B}_i(\mathcal{C}_2).
$

Let
$
\boldsymbol{c}
=
(c_1,c_2,\ldots,c_{q+1})
\in\mathcal{C}
$
be a codeword of weight \(i\). Write
$
\boldsymbol{c}
=
(a_1,a_2,a_3,a_4)A,$
where
$a_1,a_2,a_3,a_4\in\mathbb{F}_{q^2}.$
By the definition of \(A\), for \(1\leq j\leq q+1\),
\[
c_j
=
a_1(\beta^j)^{\frac{p^s-1}{2}}
+
a_2(\beta^j)^{\frac{p^s+1}{2}}
+
a_3(\beta^j)^{-\frac{p^s-1}{2}}
+
a_4(\beta^j)^{-\frac{p^s+1}{2}}.
\]

Since \(\beta^{q+1}=1\), we have
$
\beta^q=\beta^{-1}.
$
Then for every
\(\mu\in\mathbb{F}_{q^2}^{*}\), we obtain
\[
(\mu\boldsymbol{c})^q
=
\left(
(\mu a_3)^q,
(\mu a_4)^q,
(\mu a_1)^q,
(\mu a_2)^q
\right)A,
\]
where the \(q\)-th power is taken coordinatewise.

Define
$
a=\mu a_1-(\mu a_3)^q
$ and
$b=\mu a_2-(\mu a_4)^q.$
Since \(\mu^{q^2}=\mu\) and \(a_j^{q^2}=a_j\), we have
$$
-a^q
=
\mu a_3-(\mu a_1)^q
\,\, \text{and}\,\,
-b^q
=
\mu a_4-(\mu a_2)^q.
$$
Hence,
\[
\begin{aligned}
\mu\boldsymbol{c}-(\mu\boldsymbol{c})^q
&=
\bigl(
\mu a_1-(\mu a_3)^q,\,
\mu a_2-(\mu a_4)^q,
\mu a_3-(\mu a_1)^q,\,
\mu a_4-(\mu a_2)^q
\bigr)A\\
&=
(a,b,-a^q,-b^q)A
\in\mathcal{C}_2.
\end{aligned}
\]

It remains to choose \(\mu\) so that
\[
\operatorname{supp}
\left(
\mu\boldsymbol{c}-(\mu\boldsymbol{c})^q
\right)
=
\operatorname{supp}(\boldsymbol{c}).
\]
For each
\(j\in\operatorname{supp}(\boldsymbol{c})\), define
$
S_j
=
\left\{
\mu\in\mathbb{F}_{q^2}^{*}:
\mu c_j-\mu^q c_j^q=0
\right\}.
$
Since \(c_j\neq 0\), the equality
$
\mu c_j-\mu^q c_j^q=0
$
is equivalent to
$
(\mu c_j)^q=\mu c_j,
$
and then
$
\mu c_j\in\mathbb{F}_q^{*}.
$
Consequently,
$
S_j=c_j^{-1}\mathbb{F}_q^{*}
$
and
$
|S_j|=q-1.
$

Since
$
\operatorname{wt}(\boldsymbol{c})=i\leq q,
$
we obtain
$
\left|
\bigcup_{j\in\operatorname{supp}(\boldsymbol{c})}
S_j
\right|
\leq
i(q-1)
\leq
q(q-1)
<
q^2-1
=
|\mathbb{F}_{q^2}^{*}|.
$
Hence there exists
\[
\mu\in\mathbb{F}_{q^2}^{*}
\setminus
\bigcup_{j\in\operatorname{supp}(\boldsymbol{c})}S_j.
\]
For this choice of \(\mu\), whenever \(c_j\neq0\), then
$
\mu c_j-\mu^q c_j^q\neq0.
$
On the other hand, if \(c_j=0\), then clearly
$
\mu c_j-\mu^q c_j^q=0.
$
Hence,
\[
\operatorname{supp}
\left(
\mu\boldsymbol{c}-(\mu\boldsymbol{c})^q
\right)
=
\operatorname{supp}(\boldsymbol{c}).
\]
Thus every support of a weight-\(i\) codeword in
\(\mathcal{C}\) is also the support of a weight-\(i\)
codeword in \(\mathcal{C}_2\). Therefore,
\[
\mathcal{B}_i(\mathcal{C})
\subseteq
\mathcal{B}_i(\mathcal{C}_2).
\]
Combining this with (\ref{eq:ddd97}), we obtain
$
\mathcal{B}_i(\mathcal{C}_2)
=
\mathcal{B}_i(\mathcal{C})
$
for
$
i\in\{q-p^\ell,q-1,q\}.
$

By Proposition \ref{lem11}, whenever
\(\mathcal{B}_i(\mathcal{C})\neq\emptyset\),
the incidence structure
$
\left(U_{q+1},\mathcal{B}_i(\mathcal{C})\right)
$
is a \(3\)-design. Hence,
$
\left(U_{q+1},\mathcal{B}_i(\mathcal{C}_2)\right)
$
is also a \(3\)-design for
\[
i\in\{q-p^\ell,q-1,q\}.
\]

It remains to consider \(i=q+1\). If
\(\mathcal{B}_{q+1}(\mathcal{C}_2)\neq\emptyset\),
then every codeword of weight \(q+1\) has full support
\(U_{q+1}\). Thus,
$
\mathcal{B}_{q+1}(\mathcal{C}_2)
=
\{U_{q+1}\},
$
which trivially forms a
$
3-(q+1,q+1,1)
$
design.
Therefore, the codewords of every nonzero weight in
\(\mathcal{C}_2\) support a \(3\)-design.

From the above argument,
\[
\mathcal{B}_{q-p^\ell}(\mathcal{C}_2)
=
\mathcal{B}_{q-p^\ell}(\mathcal{C}).
\]
By Theorem \ref{theorem1}, the minimum weight
codewords of \(\mathcal{C}\) support a
\[
3-(q+1,q-p^\ell,\lambda)
\]
design with
\[
\lambda
=
\frac{
(q-p^\ell)(q-p^\ell-1)(q-p^\ell-2)
}{
p^{3\ell}-p^\ell
}.
\]
 This completes the proof.
\end{proof}

In the following, we give three examples to illustrate Theorem \ref{thm:C2} and Theorem \ref{lem22}. These results are consistent with those obtained by Magma.

\begin{example}
Let $p=3$, $s=1$ and $m=2$. Then $q=9$ and $\ell=\gcd(m,s)=1$. By {Theorem \ref{thm:C2}}, $\mathcal{C}_2$ is a nonlinear code with cardinality \(9^4\),
 whose possible nonzero weights are $\{10,9,8,6\}$. Moreover, by Theorem \ref{lem22}, the codewords of minimum weight $6$ in $\mathcal{C}_2$ support a $3$-$(10,6,5)$ design.
\end{example}

\begin{example}
Let $p=5$, $s=1$ and $m=2$. Then $q=25$ and $\ell=\gcd(m,s)=1$. By {Theorem \ref{thm:C2}}, $\mathcal{C}_2$ is a nonlinear code with cardinality \(25^4\),
 whose possible nonzero weights are $\{26,25,24,20\}$. Moreover, by Theorem \ref{lem22}, the codewords of minimum weight $20$ in $\mathcal{C}_2$ support a $3$-$(26,20,57)$ design.
\end{example}

\begin{example}
Let $p=3$, $s=2$ and $m=4$. Then $q=81$ and $\ell=\gcd(m,s)=2$. By {Theorem \ref{thm:C2}}, $\mathcal{C}_2$ is a nonlinear code with cardinality \(81^4\),
 whose possible nonzero weights are $\{82,81,80,72\}$. Moreover, by Theorem \ref{lem22}, the codewords of minimum weight $72$ in $\mathcal{C}_2$ support a $3$-$(82,72,497)$ design.
\end{example}

\section{Concluding remarks}

In \cite{WangTangDing}, the authors investigated the cyclic code $\mathcal{C}_1$ and showed that its minimum-weight codewords support a family of $3$-designs. In this paper, we further study two codes $\mathcal{C}$ and $\mathcal{C}_2$ associated with $\mathcal{C}_1$, where $\mathcal{C}_2$ is nonlinear and $\mathcal{C}$ is linear with
\[
\mathcal{C}=\{\mathbf{c}_1+\mathbf{c}_2:\mathbf{c}_1\in \mathcal{C}_1,\;\mathbf{c}_2\in \mathcal{C}_2\}.
\]
We proved that the codewords of every nonzero Hamming weight in $\mathcal{C}$ and $\mathcal{C}_2$ support $3$-designs. Moreover, the same property also holds for $\mathcal{C}_1$. Hence, all three codes share the common feature that the supports of the codewords of every nonzero Hamming weight form $3$-designs, yielding infinite families of $3$-designs arising from different algebraic structures.

For the linear code $\mathcal{C}$, we further determined its parameters and weight distribution, and showed that the minimum weight codewords of $\mathcal{C}^\perp$ also support $3$-designs. As applications, we constructed a $q$-ary entanglement-assisted quantum error-correcting code with parameters
\[
[[q+1,q-3,4;4]]_q,
\]
and proved that $\mathcal{C}$ is an all-symbol locally repairable code with locality $3$. In particular, when $p=3$ and $\gcd(m,s)=1$, $\mathcal{C}$ is an optimal LRC.

These results reveal a unified connection among the three codes $\mathcal{C}_1$, $\mathcal{C}_2$ and $\mathcal{C}$ through their supported $3$-designs, and further demonstrate the close interplay between coding theory, combinatorial designs, quantum error correction, and locally repairable codes.


\begin{thebibliography}{99}

\bibitem{AssmusMattson}
E. F. Assmus, Jr. and H. F. Mattson, Jr.,
``New 5-designs,''
\emph{J. Combin. Theory Ser. A}, vol. 6,
pp. 122--151, 1969.


\bibitem{Bowen}
G. Bowen,
``Entanglement required in achieving entanglement-assisted channel capacities,''
\emph{Phys. Rev. A}, vol. 66, no. 5,
Art. no. 052313, 2002.

\bibitem{BrunDevetakHsieh}
T. A. Brun, I. Devetak, and M.-H. Hsieh,
``Correcting quantum errors with entanglement,''
\emph{Science}, vol. 314, no. 5798,
pp. 436--439, Oct. 2006.

\bibitem{CadambeMazumdar}
V. R. Cadambe and A. Mazumdar,
``Bounds on the size of locally recoverable codes,''
\emph{IEEE Trans. Inf. Theory}, vol. 61, no. 11,
pp. 5787--5794, Nov. 2015.


\bibitem{ChenFangXiaFu}
B. Chen, W. Fang, S. Xia, and F. Fu,
``Constructions of optimal $(r,\delta)$ locally repairable codes via constacyclic codes,''
\emph{IEEE Trans. Commun.}, vol. 67, no. 8,
pp. 5253--5263, Aug. 2019.

\bibitem{ChenLingZhang}
B. Chen, S. Ling, and G. Zhang,
``Application of constacyclic codes to quantum MDS codes,''
\emph{IEEE Trans. Inf. Theory}, vol. 61, no. 3,
pp. 1474--1484, Mar. 2015.


\bibitem{DingTangCao}
C. Ding,
``Infinite families of 3-designs from a type of five-weight code,''
\emph{Des. Codes Cryptogr.}, vol. 86, no. 3,
pp. 703--719, 2018.

\bibitem{DingLiLi}
C. Ding and C. Li,
``Infinite families of 2-designs and 3-designs from linear codes,''
\emph{Discrete Math.}, vol. 340, no. 10,
pp. 2415--2431, Oct. 2017.

\bibitem{DingTangNMDS}
C. Ding and C. Tang,
``Infinite families of near MDS codes holding $t$-designs,''
\emph{IEEE Trans. Inf. Theory}, vol. 66, no. 9,
pp. 5419--5428, Sep. 2020.

\bibitem{DingTangSpecialPolynomials}
C. Ding and C. Tang,
``Combinatorial $t$-designs from special functions,''
\emph{Cryptogr. Commun.}, vol. 12, no. 5,
pp. 1011--1033, Sep. 2020.

\bibitem{DingTangBook}
C. Ding and C. Tang,
\emph{Designs From Linear Codes}, 2nd ed.
Singapore: World Scientific, 2022.


\bibitem{DingTangTonchevPGL}
C. Ding, C. Tang, and V. D. Tonchev,
``The projective general linear group $\mathrm{PGL}(2,2^m)$ and linear codes of length $2^m+1$,''
\emph{Des. Codes Cryptogr.}, vol. 89, no. 7,
pp. 1713--1734, 2021.

\bibitem{DuWangFan}
X. Du, R. Wang, and C. Fan,
``Infinite families of 2-designs from a class of cyclic codes,''
\emph{J. Combin. Des.}, vol. 28, no. 3,
pp. 157--170, 2020.

\bibitem{GopalanHuangSimitciYekhanin}
P. Gopalan, C. Huang, H. Simitci, and S. Yekhanin,
``On the locality of codeword symbols,''
\emph{IEEE Trans. Inf. Theory}, vol. 58, no. 11,
pp. 6925--6934, Nov. 2012.

\bibitem{Helleseth2004}
T. Helleseth, J. Lahtonen, and P. Rosendahl,
``On certain equations over finite fields and cross-correlations of $m$-sequences,''
in \emph{Coding, Cryptography and Combinatorics},
K. Feng, H. Niederreiter, and C. Xing, Eds.
Basel, Switzerland: Birkhauser, 2004,
pp. 169--176.

\bibitem{HuffmanPless}
W. C. Huffman and V. Pless,
\emph{Fundamentals of Error-Correcting Codes}.
Cambridge, U.K.: Cambridge Univ. Press, 2003.




\bibitem{KhosrovshahiLaue}
G. B. Khosrovshahi and H. Laue,
``$t$-Designs with $t\geq 3$,''
in \emph{Handbook of Combinatorial Designs}, 2nd ed.,
C. J. Colbourn and J. H. Dinitz, Eds.
Boca Raton, FL, USA: CRC Press, 2007,
pp. 79--101.

\bibitem{LiuDingMesnagerTangTonchev2}
Q. Liu, C. Ding, S. Mesnager, C. Tang, and V. D. Tonchev,
``On infinite families of narrow-sense antiprimitive BCH codes admitting 3-transitive automorphism groups and their consequences,''
\emph{IEEE Trans. Inf. Theory}, vol. 68, no. 5,
pp. 3096--3107, May 2022.

\bibitem{SunZhuWang}
Z. Sun, S. Zhu, and L. Wang,
``Optimal constacyclic locally repairable codes,''
\emph{IEEE Commun. Lett.}, vol. 23, no. 2,
pp. 206--209, Feb. 2019.



\bibitem{Tan2023}
P. Tan, C. Fan, C. Ding, C. Tang, and Z. Zhou,
``The minimum locality of linear codes,''
\emph{Des. Codes Cryptogr.}, vol. 91, no. 1,
pp. 83--114, Jan. 2023.

\bibitem{TanZhouYanParampalli}
P. Tan, Z. Zhou, H. Yan, and U. Parampalli,
``Optimal cyclic locally repairable codes via cyclotomic polynomials,''
\emph{IEEE Commun. Lett.}, vol. 23, no. 2,
pp. 202--205, Feb. 2019.



\bibitem{TangAPN}
C. Tang,
``Infinite families of 3-designs from APN functions,''
\emph{J. Combin. Des.}, vol. 28, no. 2,
pp. 97--117, Feb. 2020.

\bibitem{TangDing4Design}
C. Tang and C. Ding,
``An infinite family of linear codes supporting 4-designs,''
\emph{IEEE Trans. Inf. Theory}, vol. 67, no. 1,
pp. 244--254, Jan. 2021.

\bibitem{TangDingXiong}
C. Tang, C. Ding, and M. Xiong,
``Codes, differentially $\delta$-uniform functions, and $t$-designs,''
\emph{IEEE Trans. Inf. Theory}, vol. 66, no. 6,
pp. 3691--3703, Jun. 2020.


\bibitem{TonchevCodesDesigns}
V. D. Tonchev,
``Codes and designs,''
in \emph{Handbook of Coding Theory}, vol. 2,
V. S. Pless and W. C. Huffman, Eds.
Amsterdam, The Netherlands: Elsevier, 1998,
pp. 1229--1268.


\bibitem{WangTangDing}
X. Wang, C. Tang, and C. Ding,
``Infinite families of cyclic and negacyclic codes supporting 3-designs,''
\emph{IEEE Trans. Inf. Theory}, vol. 69, no. 4,
pp. 2341--2354, Apr. 2023.


\bibitem{WildeBrun}
M. M. Wilde and T. A. Brun,
``Optimal entanglement formulas for entanglement-assisted quantum coding,''
\emph{Phys. Rev. A}, vol. 77, no. 6,
Art. no. 064302, 2008.




\bibitem{XiangTangLiu}
C. Xiang, C. Tang, and Q. Liu,
``An infinite family of antiprimitive cyclic codes supporting Steiner systems $S(3,8,7^m+1)$,''
\emph{Des. Codes Cryptogr.}, vol. 90, no. 6,
pp. 1319--1333, Jun. 2022.

\bibitem{XuCaoQu}
G. Xu, X. Cao, and L. Qu,
``Infinite families of 3-designs and 2-designs from almost MDS codes,''
\emph{IEEE Trans. Inf. Theory}, vol. 68, no. 7,
pp. 4344--4353, Jul. 2022.


\bibitem{YanZhou}
Q. Yan and J. Zhou,
``Infinite families of linear codes supporting more $t$-designs,''
\emph{IEEE Trans. Inf. Theory}, vol. 68, no. 7,
pp. 4365--4377, Jul. 2022.






\bibitem{ZhuSunLi}
S. Zhu, Z. Sun, and P. Li,
``A class of negacyclic BCH codes and its application to quantum codes,''
\emph{Des. Codes Cryptogr.}, vol. 86, no. 10,
pp. 2139--2165, Oct. 2018.


























\end{thebibliography}
\end{document}